\newcommand{\Tr}{\operatorname{Tr}}
\newcommand{\lin}[1]{\mathcal{L}(#1)}
\newcommand{\hilb}[1]{\mathcal{#1}}
\newcommand{\ch}[1]{\mathcal{#1}}
\newcommand{\N}[1]{\left|\!\left|{#1}\right|\!\right|}
\newcommand{\ketbra}[2]{\ket{#1}\!\!\bra{#2}}
\newcommand{\sgn}{\operatorname{sgn}}
\newtheorem{thm}{Theorem}
\newtheorem{cor}{Corollary}
\newtheorem{lmm}{Lemma}
\newcommand{\euset}[3]{S_{#1}^{#2}(#3)}
\newcommand{\thr}[2]{W \left( #1, #2 \right)}
\newcommand{\helstrom}[3]{H_{#1} \left(#2, #3 \right)}
\newcommand{\qw}[1][-1]{\ar @{-} [0,#1]}
\newcommand{\cw}[1][-1]{\ar @{=} [0,#1]}
\newcommand{\gate}[1]{*{\xy *+<.6em>{#1};p\save+LU;+RU **\dir{-}\restore\save+RU;+RD **\dir{-}\restore\save+RD;+LD **\dir{-}\restore\POS+LD;+LU **\dir{-}\endxy} \qw}
\newcommand{\measureD}[1]{*{\xy*+=+<.5em>{\vphantom{\rule{0em}{.1em}#1}}*\cir{r_l};p\save*!R{#1} \restore\save+UC;+UC-<.5em,0em>*!R{\hphantom{#1}}+L **\dir{-} \restore\save+DC;+DC-<.5em,0em>*!R{\hphantom{#1}}+L **\dir{-} \restore\POS+UC-<.5em,0em>*!R{\hphantom{#1}}+L;+DC-<.5em,0em>*!R{\hphantom{#1}}+L **\dir{-} \endxy} \qw}
\newcommand{\push}[1]{*{#1}}
\newcommand{\Qcircuit}[1][0em]{\xymatrix @*=<#1>}
\newcommand{\prepareC}[1]{*{\xy*+=+<.5em>{\vphantom{#1\rule{0em}{.1em}}}*\cir{l^r};p\save*!L{#1} \restore\save+UC;+UC+<.5em,0em>*!L{\hphantom{#1}}+R **\dir{-} \restore\save+DC;+DC+<.5em,0em>*!L{\hphantom{#1}}+R **\dir{-} \restore\POS+UC+<.5em,0em>*!L{\hphantom{#1}}+R;+DC+<.5em,0em>*!L{\hphantom{#1}}+R **\dir{-} \endxy}}
\begin{document}

\title{Device-independent tests of quantum channels}

\author{Michele \surname{Dall'Arno}}

\email{cqtmda@nus.edu.sg}

\affiliation{Centre    for   Quantum    Technologies,   National
  University of Singapore, 3 Science Drive 2, 117543, Singapore}

\author{Sarah \surname{Brandsen}}

\email{sbrandse@caltech.edu}

\affiliation{Centre    for   Quantum    Technologies,   National
  University of Singapore, 3 Science Drive 2, 117543, Singapore}

\author{Francesco \surname{Buscemi}}

\email{buscemi@is.nagoya-u.ac.jp}

\affiliation{Graduate School of Information Science, Nagoya
  University, Chikusa-ku, Nagoya, 464-8601, Japan}

\date{\today}

\begin{abstract}
  We  develop  a  device-independent framework  for  testing
  quantum channels. That is, we falsify a hypothesis about a
  quantum  channel   based  only  on  an   observed  set  of
  input-output correlations. Formally,  the problem consists
  of  characterizing the  set  of input-output  correlations
  compatible with any arbitrary  given quantum channel.  For
  binary  (i.e.,  two  input symbols,  two  output  symbols)
  correlations,  we  show  that  extremal  correlations  are
  always achieved by  orthogonal encodings and measurements,
  irrespective  of  whether  or not  the  channel  preserves
  commutativity.   We further  provide  a full,  closed-form
  characterization of the sets of binary correlations in the
  case of:  i) any dihedrally-covariant qubit  channel (such
  as any Pauli and  amplitude-damping channels), and ii) any
  universally-covariant commutativity-preserving  channel in
  an arbitrary dimension (such as any erasure, depolarizing,
  universal cloning, and universal transposition channels).
\end{abstract}

\maketitle

\section{Introduction}
\label{sec:introduction}

Any  physical experiment  is based  upon the  observation of
correlations  among events  at various  points in  space and
time,  along  with  some assumptions  about  the  underlying
physics.   Naturally, in  order to  be operational  any such
assumption  must  have been  tested  as  a hypothesis  in  a
previous  experiment.   Ultimately,  to break  an  otherwise
circular   argument,   experiments  involving   no   further
assumptions  are  required  -- that  is,  device-independent
tests.

Formally, a hypothesis consists of a circuit~\cite{circuit},
which is usually  assumed to have a  global causal structure
(following  special relativity),  and its  components, which
are usually assumed  to be governed by  classical or quantum
theories and thus representable by channels.

Denoting  a hypothesis  (circuit)  by $\ch{X}$,  the set  of
correlations   compatible  with   $\ch{X}$  is   denoted  by
$\euset{}{}{\ch{X}}$.    Then,   hypothesis    $\ch{X}$   is
falsified,  along with  any other  hypothesis $\ch{Y}$  such
that  $\euset{}{}{\ch{Y}} \subseteq  \euset{}{}{\ch{X}}$, as
soon  as  the  observed   correlation  does  not  belong  to
$\euset{}{}{\ch{X}}$  (This  inclusion relation  induces  an
ordering  among  channels  which   is  reminiscent  of  that
introduced    by   Shannon~\cite{Sha58}    among   classical
channels).  Therefore,  from the theoretical  viewpoint, the
problem  of   falsifying  a   hypothesis  $\ch{X}$   can  be
recast~\cite{selftest}  as that  of  characterising the  set
$\euset{}{}{\ch{X}}$ of compatible correlations.

Since (discrete, memoryless) classical  channels are {\em by
  definition}    input-output   correlations    (conditional
probabilities), the characterisation of $\euset{}{}{\ch{X}}$
is trivial  in classical theory  as it is a  polytope easily
related  to the  correlation defining  the channel.   On the
contrary, the problem is far from trivial in quantum theory:
due  to  the  existence  of  superpositions  of  states  and
effects,  the  set   $\euset{}{}{\ch{X}}$  can  be  strictly
convex.

In this  work we  address the problem  of device-independent
tests    of   quantum    channels,    in   particular    the
characterization  of   the  set   $\euset{m}{n}{\ch{X}}$  of
$m$-inputs/$n$-outputs  correlations   $p_{j|i}$  obtainable
through an arbitrary given  channel $\ch{X}$, upon the input
of an  arbitrary preparation $\{ \rho_i  \}_{i=0}^{m-1}$ and
the   measurement   of   an   arbitrary   POVM   $\{   \pi_j
\}_{j=0}^{n-1}$, that is
\begin{align}
  \label{eq:chtest}
  p_{j|i} := \Tr[\ch{X}(\rho_i) \pi_j ] \quad = \quad
  \begin{aligned}
    \Qcircuit @C=4pt  @R=4pt {  \push{i \;}  & \prepareC{\rho_i}
      \cw & \gate{\ch{X}} & \measureD{\pi_j} & \cw & \push{\; j}
    }
  \end{aligned} \;.
\end{align}
The  analogous  problems   of  device-independent  tests  of
quantum states and measurements have been recently addressed
in Ref.~\cite{Dal17} and Ref.~\cite{DBBV16}, respectively.

An alternative  formulation for the problem  considered here
can be given  in terms of a ``game''  involving two parties:
an  experimenter, claiming  to  be able  to prepare  quantum
states, feed them through some quantum channel $\ch{X}$, and
then  perform measurements  on the  output, and  a skeptical
theoretician,   willing  to   trust  observed   correlations
only. If  the experimenter produces {\em  some} correlations
lying   outsides   of   $\euset{m}{n}{\ch{X}}$,   then   the
theoretician must conclude that the actual channel $\ch{X}'$
is not  worse than  $\ch{X}$ at producing  correlations, but
this is not sufficient  to support the experimenter's claim.
Indeed,  in   order  to   convince  the   theoretician,  the
experimenter   must    produce   {\em   the    entire   set}
$\euset{m}{n}{\ch{X}}$: in fact, it is sufficient to produce
a set  of correlations  whose convex  hull \textit{contains}
$\euset{m}{n}{\ch{X}}$. Then, the theoretician must conclude
that whatever  channel the  experimenter actually has  is at
least as good as $\ch{X}$ at producing correlations, and the
experimenter's claim is accepted.

It  is hence  clear that  the problem  of device-independent
tests  of quantum  channels induces  a preordering  relation
among quantum channels: $\ch{X}  \succeq \ch{Y}$ if and only
if  $\euset{m}{n}{\ch{X}}  \supseteq  \euset{m}{n}{\ch{Y}}$.
(The  order  also   depends  upon  $m$  and   $n$,  but  for
compactness we drop the indexes whenever they are clear from
the context).   In order to characterize  such preorder, for
any  given  channel $\ch{X}$,  we  need  to i)  provide  the
experimenter with all the states and measurements generating
the extremal correlations of $\euset{m}{n}{\ch{X}}$, and ii)
provide   the   theoretician   with   a   full   closed-form
characterization  of   the  set   $\euset{m}{n}{\ch{X}}$  of
compatible correlations.

As   a   preliminary  result,   we   find   that  the   sets
$\euset{m}{n}{\ch{X}}$  coincide   for  any  $d$-dimensional
unitary and  dephasing channels, for  any $d$, $m$,  and $n$
(this is an immediate consequence  of a remarkable result by
Frenkel and  Weiner~\cite{FW15}.) Upon considering  only the
binary case $m  = n = 2$,  our first result is  to show that
any correlation on the boundary of $\euset{2}{2}{\ch{X}}$ is
achieved by a pair of  commuting pure states -- irrespective
of whether  $\ch{X}$ is a  commutativity-preserving channel.
Then,   we    derive   the   {\em    complete   closed-form}
characterization of $\euset{2}{2}{\ch{X}}$ for: i) any given
dihedrally-covariant qubit channel,  including any Pauli and
amplitude-damping    channels;    and    ii)    any    given
universally-covariant    commutativity-preserving   channel,
including  any erasure,  depolarizing, universal  $1 \to  2$
cloning~\cite{Wer98},              and             universal
transposition~\cite{BDPS03} channels.

Upon  specifying $\ch{X}$  as  the $d$-dimensional  identity
channel   $\ch{I}_d$,    one   recovers   device-independent
dimension   tests   analogous    to   those   discussed   in
Refs.~\cite{GBHA10, HGMBAT12, ABCB11, DPGA12}, in which case
the  aforementioned   ordering  induced  by   the  inclusion
$\euset{m}{n}{\ch{I}_{d_0}}                        \subseteq
\euset{m}{n}{\ch{I}_{d_1}} \Leftrightarrow  d_0 \le  d_1$ is
of   course  total.   However,  the   completeness  of   our
characterization of $\euset{2}{2}{\ch{X}}$  implies that our
framework detects  {\em all} correlations  incompatible with
the given  hypothesis, unlike  Refs.~\cite{GBHA10, HGMBAT12,
  ABCB11, DPGA12,  CBB15} where  the set of  correlations is
tested only along an arbitrarily chosen direction.

Let  us  provide  a  preview of  some  consequences  of  our
results:
\begin{itemize}
\item Any Pauli  channel $\mathcal{P}^{\vec\lambda} : \rho
  \to \lambda_0 \rho +  \sum_{k=1}^3 \lambda_k \sigma_k \rho
  \sigma_k^\dagger$ is compatible with $p$ if and only if
  \begin{align*}
    \frac{|p_{1|1}  - p_{1|2}|}{1-|p_{1|1}  - p_{2|2}|}  \le
    \max_{k \in [1,3]} |2(\lambda_0 + \lambda_k) - 1|;
  \end{align*}
\item any amplitude-damping channel $\mathcal{A}^\lambda :
  \rho \to A_0 \rho A_0^\dagger + A_1 \rho A_1^\dagger$ with
  $A_0  = \ketbra{0}{0}  + \sqrt{\lambda}\ketbra{1}{1}$  and
  $A_1 = \sqrt{1-\lambda}  \ketbra{0}{1}$ is compatible with
  $p$ if and only if
  \begin{align*}
    \left(  \sqrt{p_{1|2}p_{2|1}}   -  \sqrt{p_{1|1}p_{2|2}}
    \right)^2 \le \lambda;
  \end{align*}
\item any  $d$-dimensional erasure channel  $\mathcal{E}_d :
  \rho \to  \lambda \rho \oplus (1-\lambda)  \Tr[\rho] \phi$
  for some pure  state $\phi$ is compatible with  $p$ if and
  only if
  \begin{align*}
    | p_{1|1} - p_{1|2}| \le \lambda;
  \end{align*}
\item     any    $d$-dimensional     depolarizing    channel
  $\mathcal{D}_d^\lambda   :  \rho   \to   \lambda  \rho   +
  (1-\lambda) \Tr[\rho]  \openone/d$ is compatible  with $p$
  if and only if
  \begin{equation*}
    \left\{\begin{split}
        &| p_{1|1} - p_{1|2}| \le \lambda,\\
        &         \frac{|p_{1|1}-p_{1|2}|}{1-|p_{1|1}-p_{2|2}|}                \le \frac{d\lambda}{2-2\lambda+d\lambda};
    \end{split}\right.
  \end{equation*}
\item  the  $d$-dimensional  universal  optimal  $1  \to  2$
  cloning~\cite{Wer98} channel $\mathcal{C}_d$ is compatible
  with $p$ if and only if
  \begin{align*}
      | p_{1|1} - p_{1|2}| \le \frac{d}{d+1};
  \end{align*}
\item     any      $d$-dimensional     universal     optimal
  transposition~\cite{BDPS03}  channel   $\mathcal{T}_d$  is
  compatible with $p$ if and only if
  \begin{equation*}
    \left\{\begin{split}
        | p_{1|1} - p_{1|2}| \le \frac1{d+1},\\
        \displaystyle  \frac{|p_{1|1}  -  p_{1|2}|}{1  -  |p_{1|1}-p_{2|2}|}  \le
        \frac13.
    \end{split}\right.
  \end{equation*}
\end{itemize}

This paper is structured as  follows.  We will introduce our
framework and  discuss the case  of unitary and  trace class
channels in Section~\ref{sec:general}.  For the binary case,
introduced  in Section~\ref{sec:binary},  we will  solve the
problem  for  any   qubit  dihedrally-covariant  channel  in
Section~\ref{sec:qubit},  and for  any arbitrary-dimensional
universally-covariant  commutativity-preserving  channel  in
Section~\ref{sec:covariant}.  In Section~\ref{sec:cartesian}
we will provide a  natural geometrical interpretation of our
results,   and  in   Section~\ref{sec:conclusion}  we   will
summarize our results and present further outlooks.

\section{General results}
\label{sec:general}

We will make use of  standard definitions and results in quantum
information theory~\cite{Wil11}. Since $\euset{m}{n}{\ch{X}}$ is
convex  for any  $n$  and $m$,  the  {\it hyperplane  separation
  theorem}~\cite{BV04,Bus12}     states    that     $p    \notin
\euset{m}{n}{\ch{X}}$ if and  only if there exists  an $m \times
n$ real matrix $w$ such that
\begin{align}
  \label{eq:hypsepthm}
  p^T \cdot w - \thr{\ch{X}}{w} > 0,
\end{align}
where $p^T \cdot w := \sum_{i,j} p_{j|i} w_{i,j} $, and
\begin{align}
  \label{eq:threshold}
  \thr{\ch{X}}{w} := \max_{q \in \euset{m}{n}{\ch{X}}} w^T \cdot
  q,
\end{align}
We     call    $w$     a    {\em     channel    witness}     and
$\thr{\ch{X}}{w}$  its  threshold value  for  channel
$\ch{X}$.

Although Eq.~\eqref{eq:hypsepthm} generally allows one to detect
{\em  some}   conditional  probability  distributions   $p$  not
belonging  to  $\euset{m}{n}{\mathcal{X}}$ for  any  arbitrarily
fixed witness $w$, here our aim is to detect {\em any} such $p$.
Direct application  of Eq.~\eqref{eq:hypsepthm}  is impractical,
as one would  need to consider {\em all} of  the infinitely many
witnesses $w$.  Notice however that Eq.~\eqref{eq:hypsepthm} can
be   rewritten  through   negation  by   stating  that   $p  \in
\euset{m}{n}{\ch{X}}$  if  and only  if  for  any $m  \times  n$
witness $w$ one has
\begin{align*}
  p^T \cdot w - \thr{\ch{X}}{w} \le 0,
\end{align*}
We then have our first preliminary result.

\begin{lmm}
  \label{thm:compatibility}
  A  channel  $\mathcal{X}  : \lin{\hilb{H}}  \to  \lin{\hilb{K}}$  is
  compatible with conditional probability distribution $p$ if and only
  if
  \begin{align}
    \label{eq:compatibility}
    \max_w \left[ p^T \cdot w - \thr{\ch{X}}{w} \right] \le 0.
  \end{align}
\end{lmm}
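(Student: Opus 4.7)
The statement is essentially a direct restatement of the hyperplane separation characterization (Eq.~\eqref{eq:hypsepthm}) already established immediately before the lemma, so the plan is simply to verify the two implications, neither of which requires new machinery.

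The plan is to prove both directions from the definition of $\thr{\ch{X}}{w}$ in Eq.~\eqref{eq:threshold}. For the ``only if'' direction, I would assume $p \in \euset{m}{n}{\ch{X}}$ and observe that, for any $m\times n$ real witness $w$, the definition $\thr{\ch{X}}{w} = \max_{q \in \euset{m}{n}{\ch{X}}} w^T \cdot q$ immediately yields $p^T \cdot w \le \thr{\ch{X}}{w}$, since $p$ is one of the admissible maximization points. Hence $p^T \cdot w - \thr{\ch{X}}{w} \le 0$ for every $w$, and taking the supremum over $w$ preserves the inequality. (Note $w = 0$ gives value $0$, so the supremum is exactly $0$ in this case, which is harmless for the stated inequality $\le 0$.)

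For the ``if'' direction, I would argue by contrapositive: assume $p \notin \euset{m}{n}{\ch{X}}$ and derive $\max_w [p^T \cdot w - \thr{\ch{X}}{w}] > 0$. This is precisely the content of the hyperplane separation theorem as stated in Eq.~\eqref{eq:hypsepthm}: convexity of $\euset{m}{n}{\ch{X}}$ (and its closedness, which follows from compactness of the set of states and POVMs together with continuity of the map in Eq.~\eqref{eq:chtest}) guarantees the existence of a witness $w^*$ such that $p^T \cdot w^* - \thr{\ch{X}}{w^*} > 0$. This particular $w^*$ already makes the maximum strictly positive, contradicting the hypothesis.

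The only conceptual subtlety — and the point I would be most careful about — is ensuring that the separating hyperplane theorem applies with closed, convex $\euset{m}{n}{\ch{X}}$, so I would explicitly note the compactness argument for closedness rather than invoke it silently. Beyond that, no step is expected to be technically difficult; the lemma is really a ``packaging'' statement that rewrites the existential quantifier over separating hyperplanes as a single max over witnesses, in a form more convenient for the subsequent explicit channel-by-channel computations.
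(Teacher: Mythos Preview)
Your proposal is correct and follows essentially the same approach as the paper: the paper simply states that Lemma~\ref{thm:compatibility} is the negation of the hyperplane separation statement in Eq.~\eqref{eq:hypsepthm}, without giving a separate formal proof. Your added remark about closedness of $\euset{m}{n}{\ch{X}}$ via compactness of states and POVMs is a nice point the paper leaves implicit.
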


Let us start by considering an arbitrary $d$-dimensional unitary
channel $\ch{U}_d : \rho \to U \rho U^\dagger$, for some unitary
$U \in \lin{\hilb{H}}$  with $\dim\hilb{H} = d$.  If  $d \ge m$,
the maximization  in Eq.~\eqref{eq:threshold} is  trivial, since
the input labels $i \in [1,m]$  can all be encoded on orthogonal
states, so that  {\em any} $m \times  n$ conditional probability
distribution $q$ can in fact be  obtained.  However, if $d < m$,
the evaluation of the  witness threshold $\thr{\ch{U}_d}{w}$ for
any witness $w$  is far from obvious.   The solution immediately
follows  from  a  recent,   remarkable  result  by  Frenkel  and
Weiner~\cite{FW15}.   It turns  out that  $\thr{\ch{U}_d}{w}$ is
attained on  extremal conditional probability  distributions $q$
compatible with  the exchange  of a classical  $d$-level system,
namely, those  $q$ where $q_{j|i}  = 0$ or  $1$ for any  $i$ and
$j$, and  such that $q_{j|i} \neq  0$ for at most  $d$ different
values  of $j$.   Frenkel and  Weiner's result  hence guarantees
that the threshold $\thr{\ch{U}_d}{w}$ can be provided in closed
form since,  for any $m$  and $n$,  the number of  such extremal
classical  conditional probabilities  is finite,  i.e., the  set
$\euset{m}{n}{\ch{U}_d}$ is  a {\em polytope}.   Any probability
$p$ lying outside $\euset{m}{n}{\ch{U}_d}$  can thus be detected
by testing  the violation of Eq.~\eqref{eq:compatibility}  for a
{\em finite number} of witnesses $w$, corresponding to the faces
of the polytope.  Moreover,  the set $\euset{m}{n}{\ch{U}_d}$ of
distributions  compatible   with  any   $d$-dimensional  unitary
channel      $\ch{U}_d$     coincides      with     the      set
$\euset{m}{n}{\ch{F}_d^\lambda}$  of   distributions  compatible
with any  $d$-dimensional dephasing channel  $\ch{F}_d^\lambda :
\rho \to \lambda \rho + (1-\lambda) \sum_k \braket{k | \rho | k}
\ketbra{k}{k}$.

At  the opposite  end of  the unitary  channels, there  sit {\em
  trace-class  channels} $\ch{T}  :  \rho \to  \sigma$ for  some
arbitrary but fixed state $\sigma$. In this case, no information
about $i$ (the input label) can be communicated.  Of course, the
set  $\euset{m}{n}{\ch{T}}$ of  correlations achievable  through
any  trace-class  channel  $\ch{T}$   does  not  depend  on  the
particular  choice of  $\sigma$:  a  trace-class channel  simply
means that  no communication is available.   For any trace-class
channel  $\ch{T}$ and  any witness  $w$, it  immediately follows
that  the   threshold  $\thr{\mathcal{T}}{w}$  is   achieved  by
conditional  probabilities $q$  such that  $q_{j|i} =  1$ for  a
single value of $j$, and  therefore is given by $\thr{\ch{T}}{w}
=   \max_j  \sum_i   w_{i,j}$.   As   a  consequence,   the  set
$\euset{m}{n}{\ch{T}}$ is a polytope  with $n$ vertices, and any
probability  $p$  lying  outside $\euset{m}{n}{\ch{T}}$  can  be
detected       by      testing       the      violation       of
Eq.~\eqref{eq:compatibility}  for  a   {\em  finite}  number  of
witnesses $w$.

\section{Binary conditional probability distribution}
\label{sec:binary}

In the  remainder of this work  we will consider the  case where
$p$   is   a   binary   input-output   conditional   probability
distributions (i.e.   $m = n =  2$).

First,  we  show  that  it  suffices  to  consider  diagonal  or
anti-diagonal witnesses with positive entries summing up to one.
Indeed, for  any witness  $w$, the  witness $w'  := \alpha  (w +
\beta)$,  where   $\alpha  >  0$   and  $\beta$  is   such  that
$\beta_{i,j}$     is     independent      of     $j$,     leaves
Eq.~\eqref{eq:compatibility}   invariant  for   any  conditional
probability  distribution $p$  and channel  $\ch{X}$, since  $w'
\cdot p = \alpha (p^T \cdot w + \sum_i \beta_{i,1})$.

By taking $\beta_{i,j}  = -\min_k w_{i,k}$ for any  $i$ and $j$,
the witness  $w'$ is  diagonal, anti-diagonal,  or has  a single
non-null column.   We first consider the  latter case.  Clearly,
the maximum in Eq.~\eqref{eq:threshold}  is attained when $p$ is
a vertex of the polytope $\euset{2}{2}{\ch{T}}$ of probabilities
compatible  with  any   trace-type  channel  $\mathcal{T}$,  and
therefore Eq.~\eqref{eq:compatibility} is always verified.  Then
we consider  the case  of diagonal and  anti-diagonal witnesses.
By  taking  $\alpha^{-1}  =  \sum_i  |w_{i,1}  -  w_{i,2}|$  one
recovers the  normalization condition $\sum_{i,j} w_{i,j}  = 1$,
thus proving the statement.

Therefore, upon  denoting with $w^\pm(\omega)$ the  diagonal and
anti-diagonal witnesses given by
\begin{align*}
  w^+(\omega)  := \begin{pmatrix}  \frac{1+\omega}2 &  0 \\  0 &
    \frac{1-\omega}2 \end{pmatrix}, \quad w^-(\omega) :=
  \begin{pmatrix}  0 &  \frac{1+\omega}2  \\ \frac{1-\omega}2  &
    0 \end{pmatrix},
\end{align*}
where  $\omega \in  [-1,1]$, one  has the  following preliminary
result.

\begin{lmm}
  The maximum  in Eq.  \ref{eq:compatibility} is  attained for a
  diagonal or anti-diagonal witness, namely
  \begin{align*}
    & \max_w (p^T \cdot w - \thr{\ch{X}}{w}) \\ = & \max_{\omega
      \in    [-1,1]}    (p^T     \cdot    w^\pm(    \omega)    -
    \thr{\ch{X}}{w^\pm(\omega)}).
  \end{align*}
\end{lmm}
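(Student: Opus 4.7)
The plan is to reduce the optimization over all $2\times 2$ witnesses to the two one-parameter families $w^\pm(\omega)$ by combining an invariance argument with a support analysis. I would first observe that the functional $F(w) := p^T\cdot w - \thr{\ch{X}}{w}$ enjoys two symmetries: (i) $F(w+\beta) = F(w)$ whenever $\beta_{i,j}$ depends only on the row index $i$, and (ii) $F(\alpha w) = \alpha F(w)$ for any $\alpha>0$, so positive rescaling preserves the sign of $F$ and hence the truth of Lemma~\ref{thm:compatibility}. Property (i) holds because every row of a conditional probability $q$ sums to one, so $\beta$ contributes the same additive constant $\sum_i \beta_{i,1}$ to $p^T\cdot w$ and to every feasible $q^T\cdot w$ in the maximization defining $\thr{\ch{X}}{w}$, and therefore cancels.

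Next, I would use invariance (i) with $\beta_{i,j} := -\min_k w_{i,k}$ to obtain an equivalent witness $w'$ each of whose rows contains a zero entry, with the remaining entries nonnegative. For a $2\times 2$ matrix this forces $w'$ to follow one of three patterns: purely diagonal, purely anti-diagonal, or supported in a single column. The single-column case can be dismissed as follows: if $w'$ has nonzero entries only in column $j^\star$, then $p^T\cdot w' = \sum_i w'_{i,j^\star}\,p_{j^\star|i} \le \sum_i w'_{i,j^\star}$, while the matching upper bound is attained in $\thr{\ch{X}}{w'}$ by any experimenter strategy realizing $q_{j^\star|i}=1$ for every $i$ (for instance by taking $\pi_{j^\star}=\openone$, which is available regardless of $\ch{X}$). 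Hence $F(w')\le 0$ for single-column witnesses, so they can never exceed the trivial lower bound $F(0)=0$ and can be excluded from the maximization.

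For the remaining diagonal and anti-diagonal patterns, I would apply invariance (ii) with $\alpha^{-1} = \sum_i |w_{i,1}-w_{i,2}|$ (if this sum is zero then the two columns of $w$ coincide and $F(w)\le 0$ trivially) to enforce the normalization $\sum_{i,j} w'_{i,j}=1$. Every normalized diagonal $2\times 2$ matrix with nonnegative entries has the form $w^+(\omega)$ for a unique $\omega\in[-1,1]$, and analogously for $w^-(\omega)$ in the anti-diagonal case, so the supremum of $F$ over all witnesses coincides with its maximum over the compact family $\{w^\pm(\omega):\omega\in[-1,1]\}$. The step I expect to require the most care is the single-column one: one must argue carefully that the experimenter always has enough freedom in the choice of POVM to saturate $\thr{\ch{X}}{w'}$ in that case, a fact that ultimately rests on the inclusion of the trace-class correlation polytope inside $\euset{2}{2}{\ch{X}}$ for every channel $\ch{X}$.
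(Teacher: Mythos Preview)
Your proposal is correct and follows essentially the same route as the paper: the same row-shift invariance with $\beta_{i,j}=-\min_k w_{i,k}$, the same trichotomy into diagonal, anti-diagonal, and single-column witnesses, the same dismissal of the single-column case via the trace-class correlations (you make this explicit by picking $\pi_{j^\star}=\openone$), and the same normalization $\alpha^{-1}=\sum_i|w_{i,1}-w_{i,2}|$. If anything, your write-up is slightly more careful than the paper's sketch, in that you flag the degenerate case where the two columns of $w$ coincide and you frame the rescaling as preserving the sign of $F$ (which is really what matters for Lemma~\ref{thm:compatibility}, since $\sup_w F(w)$ is either $0$ or $+\infty$).
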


Any extremal distribution $q$ in Eq.~\eqref{eq:threshold} can be
represented  by states  $\rho_0$  and $\rho_1$  and  a POVM  $\{
\pi_0,  \pi_1  \}$  such   that  $q_{j|i}  =  \Tr[\ch{X}(\rho_i)
\pi_j]$.   Since $w^\pm(\omega)$  is diagonal  or anti-diagonal,
Eq.~\eqref{eq:threshold} represents  the maximum  probability of
success in the  discrimination of states $\{  \rho_0, \rho_1 \}$
with prior probabilities  given by the non-null  entries of $w$,
in the presence of noise $\ch{X}$, namely
\begin{align*}
  &    W(\mathcal{X},   w^\pm(\omega))    \\    =   &    \frac12
  \max_{\substack{\rho_0,  \rho_1\\\{\pi_0,  \pi_1  \}}}  \left[
    (1+\omega)    \Tr[\ch{X}(\rho_0)    \pi_0]   +    (1-\omega)
    \Tr[\ch{X}(\rho_1) \pi_1] \right].
\end{align*}
It is  a well-known fact~\cite{Hel76}  that the solution  of the
optimization problem  over POVMs is  given as a function  of the
Helstrom matrix defined as
\begin{align*}
  \helstrom{\omega}{\rho_0}{\rho_1} := \frac{1+\omega}2 \rho_0 -
  \frac{1-\omega}2 \rho_1,
\end{align*}
as follows
\begin{align}
  \label{eq:discrimination}
  W(\ch{X},w^\pm(\omega)) = \frac{1}{2}  \max_{ \rho_0, \rho_1 }
  \left[ 1 + \N{ \ch{X} \left( \helstrom{\omega}{\rho_0}{\rho_1}
      \right) }_1 \right],
\end{align}
where $\N{\cdot}_1$ denotes the operator $1$-norm.

It is easy  to see that without loss of  generality one can take
$\rho_0$ and $\rho_1$ such that  $[\rho_0, \rho_1] = 0$. Indeed,
let $\{ \ket{k}  \}$ be a basis of eigenvectors  of the Helstrom
matrix   $\helstrom{\omega}{\rho_0}{\rho_1}$.     The   complete
dephasing channel $\mathcal{F}_d^0$ on the basis $\{ \ket{k} \}$
is such that
\begin{align*}
  \helstrom{\omega}{\rho_0}{\rho_1}                            =
  \ch{F}_d^0(\helstrom{\omega}{\rho_0}{\rho_1})                =
  \helstrom{\omega}{\sigma_0}{\sigma_1},
\end{align*}
where $\sigma_i := \ch{F}_d^0(\rho_i)$ and therefore $[\sigma_0,
\sigma_1] = 0$.   By applying channel $\mathcal{X}$  we have the
following identity
\begin{align*}
  \mathcal{X}(\helstrom{\omega}{\rho_0}{\rho_1})               =
  \mathcal{X}(\helstrom{\omega}{\sigma_0}{\sigma_1})
\end{align*}
Therefore, the encoding $\{ \sigma_i \}$ performs as well as the
encoding $\{ \rho_i \}$, and  thus without loss of generality we
can  take  the  supremum in  Eq.~\eqref{eq:discrimination}  over
commuting encodings only.

Moreover, one  can see that  without loss of generality  one can
take  $\sigma_i$  to be  orthogonal  pure  states.  Indeed,  let
$\sigma_i  =  \sum_k  \mu_{k|i}  \ketbra{k}{k}$  be  a  spectral
decomposition of $\sigma_i$.  Due to  the convexity of the trace
norm we have
\begin{align*}
  &  \N{   \ch{X}  \left(  \helstrom{\omega}{\sigma_0}{\sigma_1}
    \right) }_1 \\ = & \N{ \sum_{k,l} \mu_{k|0} \mu_{l|1} \ch{X}
    \left(
      \helstrom{\omega}{\ketbra{k}{k}}{\ketbra{l}{l}} \right) }_1\\
  \le & \sum_{k, l}
  \mu_{k|0} \mu_{l|1} \N{ \ch{X} \left( \helstrom{\omega}{\ketbra{k}{k}}{\ketbra{l}{l}}\right) }_1 \\
  \le       &      \max_{k,l}       \N{      \ch{X}       \left(
      \helstrom{\omega}{\ketbra{k}{k}}{\ketbra{l}{l}}    \right)
  }_1.
\end{align*}
Then we have the following preliminary result.

\begin{lmm}
  \label{lmm:binary-threshold}
  The  maximum  in  Eq.~\eqref{eq:threshold}   is  given  by  an
  orthonormal pure encoding, namely
  \begin{align*}
    \thr{\mathcal{X}}{w^\pm(\omega)} :=
    \max_{\substack{\ket{\phi_0},         \ket{\phi_1}        \\
        \braket{\phi_1|\phi_0}   =  0}}   \frac12  \left[   1  +
      \N{\ch{X}    (\helstrom   {\omega}{\phi_0}{\phi_1}    )}_1
    \right],
  \end{align*}
  and  by  an  orthogonal  POVM such  that  $\pi_0$  is  the
  projector      on      the      positive      part      of
  $\helstrom{\omega}{\phi_0}{\phi_1}$ and  $\pi_1 = \openone
  - \pi_0$.
\end{lmm}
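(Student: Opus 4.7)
The plan is to combine Helstrom's theorem with the two reductions on the input states that have already been sketched in the text preceding the lemma statement.

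First, since $w^\pm(\omega)$ has non-negative (anti-)diagonal entries summing to one, $\thr{\ch{X}}{w^\pm(\omega)}$ is exactly the Bayesian optimal success probability for discriminating the channel outputs $\{\ch{X}(\rho_0), \ch{X}(\rho_1)\}$ with priors $\frac{1+\omega}{2}$ and $\frac{1-\omega}{2}$. Helstrom's theorem both yields Eq.~\eqref{eq:discrimination} and identifies the optimal measurement as a PVM whose positive element projects onto the positive part of the relevant Helstrom matrix; in the restricted case of an orthogonal pure encoding $\{\ket{\phi_0}, \ket{\phi_1}\}$ this takes the form $\pi_0, \pi_1 = \openone - \pi_0$ asserted in the lemma.

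Second, I would reduce to commuting encodings by dephasing: choosing an eigenbasis $\{\ket{k}\}$ of $\helstrom{\omega}{\rho_0}{\rho_1}$ and applying $\ch{F}_d^0$ in that basis leaves the Helstrom matrix (and hence the right-hand side of Eq.~\eqref{eq:discrimination}) invariant while replacing $\rho_i$ by a commuting pair $\sigma_i := \ch{F}_d^0(\rho_i)$. Then I would reduce further to orthogonal pure encodings: writing the spectral decompositions $\sigma_i = \sum_k \mu_{k|i} \ketbra{k}{k}$, linearity of $\ch{X}$ expresses $\ch{X}(\helstrom{\omega}{\sigma_0}{\sigma_1})$ as a convex combination with weights $\mu_{k|0}\mu_{l|1}$ of operators $\ch{X}(\helstrom{\omega}{\ketbra{k}{k}}{\ketbra{l}{l}})$. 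The triangle inequality for the trace norm then bounds its norm by $\max_{k,l}\N{\ch{X}(\helstrom{\omega}{\ketbra{k}{k}}{\ketbra{l}{l}})}_1$, matching the chain of inequalities already displayed.

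The main subtlety, and the step I expect to require the most care, is ensuring that the optimum in this maximum is achieved at $k \neq l$, so that the encoding $\{\ket{k}, \ket{l}\}$ is actually orthogonal as required by the lemma's hypothesis $\braket{\phi_1|\phi_0} = 0$. For $k = l$, $\helstrom{\omega}{\ketbra{k}{k}}{\ketbra{k}{k}} = \omega \ketbra{k}{k}$ and $\N{\ch{X}(\omega\ketbra{k}{k})}_1 = |\omega|$ for any trace-preserving $\ch{X}$. On the other hand, for any orthogonal pure pair, $\Tr[\ch{X}(\helstrom{\omega}{\phi_0}{\phi_1})] = \omega$, so $\N{\ch{X}(\helstrom{\omega}{\phi_0}{\phi_1})}_1 \ge |\omega|$ via the elementary inequality $\N{A}_1 \ge |\Tr A|$. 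Hence the off-diagonal case always matches or exceeds the diagonal one, the supremum can be restricted to orthogonal pure encodings without loss, and both claims of the lemma follow.
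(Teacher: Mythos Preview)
Your proposal is correct and follows essentially the same route as the paper: Helstrom's theorem for the optimal POVM, dephasing in the eigenbasis of the Helstrom matrix to reduce to commuting encodings, and then convexity of the trace norm over the joint spectral decomposition to reduce to pure orthogonal encodings. Your final paragraph justifying that the maximum over $k,l$ may be taken with $k\neq l$ (via $\N{A}_1 \ge |\Tr A|$ applied to the trace-preserving image of the Helstrom matrix) is in fact a detail the paper leaves implicit, so your argument is slightly more complete on that point.
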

Here, for  any pure state  $\ket{\phi}$ we denote with  $\phi :=
\ketbra{\phi}{\phi}$  the  corresponding projector.

\section{Dihedrally covariant qubit channel}
\label{sec:qubit}

Let us  start with the  case where $\ch{X} :  \lin{\hilb{H}} \to
\lin{\hilb{K}}$  is  a  qubit   channel,  i.e.  $\dim\hilb{H}  =
\dim\hilb{K} = 2$. Since Pauli  matrices span the space of qubit
Hermitian operators, any qubit  state $\rho$ can be parametrized
in terms of Pauli matrices, i.e.
\begin{align}
  \label{eq:encoding}
  \rho =  \frac{1}{2}(\openone + \vec{\sigma}^T  \cdot \vec{x}),
  \qquad |\vec{x}|_2 \le 1,
\end{align}
where  $\vec{\sigma}  =  (\sigma_x, \sigma_y,  \sigma_z)^T$  and
$\vec{x}$  are the  vectors  of Pauli  matrices  and their  real
coefficients,  respectively.  Analogously,   any  qubit  channel
$\ch{X}$ can be parametrized in terms of Pauli matrices, i.e.
\begin{align*}
  \ch{X}(\rho) = \frac12 \left(  \openone + \vec{\sigma}^T \cdot
    (A \vec{x} + \vec{b}) \right),
\end{align*}
where $A_{i,j}  = \frac{1}{2} \Tr \left[  \sigma_i \ch{X} \left(
    \sigma_j \right) \right]$ and  $b_i = \frac{1}{2} \Tr \left[
  \sigma_i \ch{X} \left( \openone \right) \right]$.

With             such              a             parametrization
$\ch{X}(\helstrom{\omega}{\phi_0}{\phi_1})$   assumes   a   very
simple form given by
\begin{align*}
  \ch{X}(\helstrom{\omega}{\phi_0}{\phi_1}) = \frac{1}{2} \left[
    \omega  \openone  +  \left(   A  \vec{x}  +  \omega  \vec{b}
    \right)^T \!\cdot\vec{\sigma} \right],
\end{align*}
whose  eigenvalues are  $\frac{1}{2} \left(  \omega \pm  \left|A
    \vec{x}  +  \omega  \vec{b}\right|_2  \right)$.   Thus,  the
witness      threshold     $\thr{\ch{X}}{w^\pm(\omega)}$      in
Eq.~\eqref{eq:threshold}  can be  readily computed  by means  of
Lemma~\ref{lmm:binary-threshold} as
\begin{align*}
  \thr{\ch{X}}{w^\pm(\omega)}  =  \frac{1}{2}  \left[ 1  +  \max
    \left( |\omega|,  \max_{\substack{\vec{x} \\  \left| \vec{x}
          \right|_2  \le 1}}  \left|A \vec{x}  + \omega  \vec{b}
      \right|_2 \right) \right].
\end{align*}

Notice  that   this  expression  is  the   maximum  between  two
strategies. The first one is given  by the trivial POVM and thus
corresponds to trivial guessing.  The  second one can be further
simplified by means of the following substitutions. Let $A = V D
U$  be a  polar decomposition  of matrix  $A$ with  $U$ and  $V$
unitaries  and  $D$   diagonal  and  positive-semidefinite  with
eigenvalues  $\vec{d}$   (accordingly  $\vec{c}   :=  -V^\dagger
\vec{b}$). By unitary invariance of the $2$-norm one has
\begin{align*}
  \max_{\substack{\vec{x}  \\ \left|  \vec{x} \right|_2  \le 1}}
  \left|A    \vec{x}    +    \omega    \vec{b}    \right|_2    =
  \max_{\substack{\vec{x}  \\ \left|  \vec{x} \right|_2  \le 1}}
  \left|D \vec{x} - \omega \vec{c} \right|_2.
\end{align*}
By defining $\vec{y} := D \vec{x}$ one has 
\begin{align*}
  \max_{\substack{\vec{x}  \\ \left|  \vec{x} \right|_2  \le 1}}
  \left|D    \vec{x}    -    \omega    \vec{c}    \right|_2    =
  \max_{\substack{\vec{y},  \vec{z} \\  \left| D^{-1}  \vec{y} +
        \left( \openone - D^{-1}D  \right) \vec{z} \right|_2 \le
      1}} \left| \vec{y} - \omega \vec{c} \right|_2,
\end{align*}
where $(\cdot)^{-1}$ denotes the Moore-Penrose pseudoinverse. By
explicit computation it follows that $[D^{-1}]^T \left( \openone
  - D^{-1}D \right) = 0$, and therefore vectors $D^{-1} \vec{y}$
and $\left( \openone -  D^{-1}D \right) \vec{z}$ are orthogonal.
Then for any  optimal $\left( \vec{y}, \vec{z}  \right)$ one has
that $\left( \vec{y}, 0 \right)$  is also optimal, since $\left|
  D^{-1}  \vec{y} +  \left( \openone  - D^{-1}D  \right) \vec{z}
\right|_2 \ge  \left| D^{-1}  \vec{y} \right|_2$.   Therefore we
have
\begin{align}
  \label{eq:qubit-threshold}
  \thr{\ch{X}}{w^\pm(\omega)}  =  \frac{1}{2}  \left[ 1  +  \max
    \left( \omega, \Delta(\omega) \right) \right],
\end{align}
where
\begin{align}
  \label{eq:distance}
  \Delta(\omega) := \max_{\substack{\vec{y} \\
      \left| D^{-1}  \vec{y} \right|_2 \le 1}}  \left| \vec{y} -
    \omega \vec{c} \right|_2.
\end{align}

The  maximum  in   Eq.~\eqref{eq:distance}  is  a  quadratically
constrained quadratic optimization problem, which is known to be
NP-hard  in general.   However,  $\Delta(\omega)$  has a  simple
geometrical interpretation: it is the maximum Euclidean distance
of vector $\omega \vec{c}$  and ellipsoid $\left| D^{-1} \vec{y}
\right|_2 \le 1$.  This interpretation suggests symmetries under
which the optimization problem becomes feasible.  In particular,
we take  vector $\vec{c}$ to be  parallel to one of  the axis of
the ellipsoid  $\left| D^{-1}  \vec{y} \right|_2 \le  1$, namely
$c_1  =  c_2  =  0$   (up  to  irrelevant  permutations  of  the
computational basis).

This  configuration  corresponds  to a  $D_2$-covariant  channel
$\ch{X}$, where $D_2$ is the dihedral group of the symmetries of
a   line  segment,   consisting   of  two   reflections  and   a
$\pi$-rotation.  This configuration is depicted in
Fig.~\ref{fig:dihedral}.
\begin{figure}[hbt]
 \begin{overpic}[width=\columnwidth]{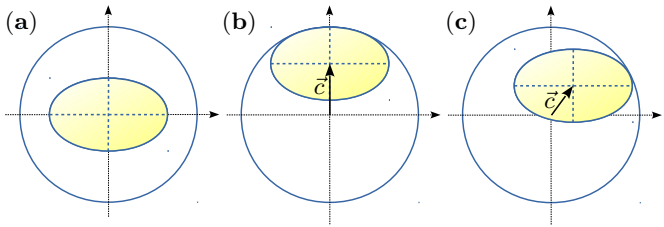}
   \put (0, 30) {$\bf (a)$}
   \put (33, 30) {$\bf (b)$}
   \put (47, 20) {$\vec{c}$}
   \put (67, 30) {$\bf (c)$}
   \put (82, 18) {$\vec{c}$}
 \end{overpic}
 \caption{Bloch-sphere  representation  of:  [{\bf  (a)},  {\bf
     (b)}] dihedrally  covariant channels $\ch{X}$  mapping the
   sphere into  an ellipsoid  {\bf (a)}  centered in  the Bloch
   sphere (e.g.  any  Pauli channel $\ch{P}^{\vec\lambda}$), or
   {\bf (b)} translated by a vector $\vec{c}$ which is parallel
   to one  of the  axis of the  ellipsoid (e.g.   any amplitude
   damping  channel $\ch{A}^\lambda$);  {\bf (c)}  non-dihedrally
   covariant channel  $\ch{X}$, as the ellipsoid  is translated
   by a  vector $\vec{c}$ which is  not parallel to any  of the
   axis of the ellipsoid.}
 \label{fig:dihedral}
\end{figure}
In  particular,  a qubit  channel  $\ch{X}$  is
$D_2$-covariant   if   and   only   if   there   exist   unitary
representations $U_k  \in \mathbb{R}^{3 \times 3}$  and $V_k \in
\mathbb{R}^{3 \times 3}$ of $D_2$ such that
\begin{align}
  \label{eq:d2-covariance}
  A U_k \vec{x} + \vec{b} = V_k(A \vec{x} + \vec{b}).
\end{align}
Up  to unitaries,  the  most general  unitary representation  of
$D_2$ in $\mathbb{R}^{3 \times 3}$ is given by
\begin{align*}
  W_1 = \sigma_z \oplus 1, \quad W_2 = -\sigma_z \oplus 1, \quad
  W_3 = - \openone \oplus 1,
\end{align*}
where  $W_1$   and  $W_2$  are   reflections  and  $W_3$   is  a
$\pi$-rotation. We take  $U_k := U^\dagger W_k U$ and  $V_k := V
W_k V^\dagger$. Then by explicit computation we have
\begin{align*}
  A U_k \vec{x}  + \vec{b} = V_k  A \vec{x} +
  \vec{b},
\end{align*}
where  we  used the  fact  that  $[D, W_k]  =  0$  for any  $k$.
Therefore,       $D_2$       covariance       expressed       by
Eq.~\eqref{eq:d2-covariance}  is equivalent  to the  requirement
$W_k \vec{c} = \vec{c}$, namely $c_1 = c_2 = 0$.

Under the  assumption of  $D_2$-covariance, we  take without
loss of generality  $d_2 \ge d_1$ and $c_3 \ge  0$.  If also
$c_3 = 0$,  we further take without loss  of generality $d_3
\ge d_2$.   Then, as  formally proved  in the  Appendix, the
maximum     Euclidean    distance     $\Delta(\omega)$    in
Eq.~\eqref{eq:distance} can be  explicitly computed, leading
to the following result.
\begin{lmm}
  \label{lmm:qubit-threshold}
  The  witness  threshold $\thr{\ch{X}}{w^\pm(\omega)}$  of  any
  qubit   $D_2$-covariant   channel   $\ch{X}$   is   given   by
  Eq.~\eqref{eq:qubit-threshold} where
  \begin{equation*}
    \Delta(\omega) =
    \left\{\begin{split}
        &d_2 \sqrt{1 + \frac{c_3^2 \omega^2}{d_2^2 - d_3^2}}, &\textrm{ if } |\omega| < \frac{d_2^2 - d_3^2}{d_3 c_3},\\
        &d_3 + c_3 |\omega|, &\textrm{ otherwise.}
    \end{split}\right.
\end{equation*}
  The optimal  encoding is given  by Eq.~\eqref{eq:encoding}
  with $\vec{x} = D^{-1} \vec{y}$ and
  \begin{align*}
    \vec{y} =
    \begin{cases}
      \left(0,        \pm        d_2        \sqrt{1        -
          \frac{c_3^2d_3^2\omega^2}{(d_3^2-d_2^2)^2}},
        \frac{c_3  d_3^2 \omega}{d_3^2  - d_2^2}\right)^T  &
      \textrm{        if        }        |\omega|        \le
      \frac{d_2^2-d_3^2}{d_3c_3}\\    \left(0,     0,    \pm
        d_3\right)^T & \textrm{ otherwise}.
    \end{cases}
  \end{align*}
\end{lmm}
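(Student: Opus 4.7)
The plan is to exploit the convention $\vec{c} = (0, 0, c_3)^T$ with $c_3 \ge 0$ secured by $D_2$-covariance, which decouples the three components of $\vec{y}$ in Eq.~\eqref{eq:distance}. Writing the ellipsoidal constraint as $y_1^2/d_1^2 + y_2^2/d_2^2 + y_3^2/d_3^2 \le 1$ and the objective as $y_1^2 + y_2^2 + (y_3 - \omega c_3)^2$, I first fix $y_3 \in [-d_3, d_3]$ and maximize over $(y_1, y_2)$. The objective is linear in $(y_1^2, y_2^2)$ with unit coefficients while the constraint becomes $y_1^2/d_1^2 + y_2^2/d_2^2 \le 1 - y_3^2/d_3^2$; since $d_2 \ge d_1$ the maximum sits at $y_1 = 0$ and $y_2 = \pm d_2 \sqrt{1 - y_3^2/d_3^2}$, with value $d_2^2(1 - y_3^2/d_3^2)$.

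This reduces Eq.~\eqref{eq:distance} to the scalar maximization
\[
\Delta(\omega)^2 = \max_{|y_3|\le d_3}\Bigl[d_2^2 + \bigl(1 - d_2^2/d_3^2\bigr)\,y_3^2 - 2\omega c_3\, y_3 + \omega^2 c_3^2\Bigr],
\]
whose integrand is a quadratic in $y_3$ with leading coefficient $1 - d_2^2/d_3^2$. If $d_3 \ge d_2$ the quadratic is convex, hence its maximum on $[-d_3, d_3]$ sits at an endpoint: this yields $\Delta(\omega) = d_3 + c_3|\omega|$ attained at $y_3 = -\sgn(\omega)\,d_3$, and since $d_2^2 - d_3^2 \le 0$ the sub-case falls unambiguously in the ``otherwise'' branch of the Lemma. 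If instead $d_3 < d_2$ the quadratic is strictly concave, and its unconstrained maximizer is $y_3^\star := \omega c_3 d_3^2/(d_3^2 - d_2^2)$, which belongs to $[-d_3, d_3]$ precisely when $|\omega| \le (d_2^2 - d_3^2)/(d_3 c_3)$.

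When $y_3^\star \in [-d_3, d_3]$, substituting it into the quadratic and simplifying the ratio $d_3^2/(d_2^2 - d_3^2)$ gives $\Delta(\omega)^2 = d_2^2\bigl(1 + c_3^2\omega^2/(d_2^2-d_3^2)\bigr)$, i.e.\ the first branch of the Lemma; the corresponding optimizer has $y_3 = y_3^\star$ and $y_2 = \pm d_2\sqrt{1 - (y_3^\star/d_3)^2}$, matching the stated closed form after noting $(d_3^2-d_2^2)^2 = (d_2^2-d_3^2)^2$. When $|y_3^\star| > d_3$, concavity pushes the maximum onto the endpoint closest to $y_3^\star$, namely $y_3 = -\sgn(\omega)\,d_3$, at which $1 - y_3^2/d_3^2 = 0$ so $\vec{y} = (0, 0, \pm d_3)^T$ and $\Delta(\omega) = d_3 + c_3|\omega|$. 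The edge case $c_3 = 0$ combined with the convention $d_3 \ge d_2$ trivially lies in the ``otherwise'' branch with $\Delta(\omega) = d_3$. I expect the main obstacle to be purely the algebraic book-keeping: verifying that $y_3^\star$ is the global maximum on $[-d_3, d_3]$ in the concave regime and tracking signs via $\sgn(\omega)$ carefully across regimes; the only substantive idea is the decoupling in the first step, which hinges entirely on $\vec{c}$ being parallel to a principal axis of the ellipsoid.
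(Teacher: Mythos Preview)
Your proposal is correct and follows essentially the same route as the paper: reduce to $y_1=0$, then optimize the resulting one-variable quadratic in $y_3$ over $[-d_3,d_3]$, splitting into the concave ($d_2>d_3$) and convex ($d_2\le d_3$) regimes. The only cosmetic difference is that you eliminate $y_1$ via a linear-programming argument in $(y_1^2,y_2^2)$, whereas the paper phrases the same step geometrically (any plane through the $z$-axis cuts the ellipsoid in an ellipse contained, up to rotation, in the $yz$-ellipse); the ensuing calculus is identical.
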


Using             Lemma~\ref{lmm:qubit-threshold}            and
Lemma~\ref{thm:compatibility},      Eq.~\eqref{eq:compatibility}
becomes  the  maximum  over  $\omega$  of  the  minimum  of  two
functions.  The  maximum is attained  either in the  maxima $0$,
$\pm \omega_1$, or $ \pm 1$ of the two functions over the domain
$[-1, 1]$, where
\begin{align*}
  \omega_1     :=     \frac     {(d_2^2-d_3^2)(p_{1|1}-p_{2|2})}
  {c_3\sqrt{c_3^2d_2^2-(d_2^2-d_3^2)(p_{1|1}-p_{2|2})^2}},
\end{align*}
(the  limit should  be considered  if $c_3  = 0$),  or in  their
intersection $\pm\omega_2$ given by
\begin{equation*}
  \omega_2 :=
  \left\{\begin{split}
    &\sqrt{\frac{d_2^2(d_2^2  - d_3^2)}{d_2^2-d_3^2-d_2^2c_3^2}},
    &   \textrm{    if   }   (d_2^2-d_3^2)   >    d_2^2   c_3,\\
    &\frac{d_3}{1-c_3}, & \textrm{ otherwise.}
  \end{split}\right.
\end{equation*}
We can then state our  first main result, formally proved in
the   Appendix,   namely    a   complete   and   closed-form
characterization  of   the  set   $\euset{2}{2}{\ch{X}}$  of
conditional  probability distributions  compatible with  any
qubit $D_2$-covariant channel $\ch{X}$.

\begin{thm}
  \label{thm:qubit-compat}
  Any given  binary conditional probability distribution  $p$ is
  compatible  with  any   given  qubit  $D_2$-covariant  channel
  $\ch{X}$ if and only if
  \begin{align}
    \label{eq:qubit-compatibility}
    \max_{\omega   \in  \Omega}   (p^T  \cdot   w^\pm(\omega)  -
    \thr{\ch{X}}{w^\pm(\omega)}) \le 0,
  \end{align}
  where $\Omega  := \{ 0, \pm\omega_1,  \pm\omega_2, \pm1\} \cap
  [-1,1]$.
\end{thm}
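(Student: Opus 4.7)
The plan is to combine Lemma~\ref{thm:compatibility}, the preceding reduction to diagonal/anti-diagonal witnesses $w^\pm(\omega)$, and the closed-form expression for $\thr{\ch{X}}{w^\pm(\omega)}$ from Lemma~\ref{lmm:qubit-threshold}, to recast the compatibility condition as the one-dimensional optimization
\begin{equation*}
\max_{\omega \in [-1,1]} f^\pm(\omega) \le 0, \qquad f^\pm(\omega) := p^T \cdot w^\pm(\omega) - \tfrac{1}{2}\left[ 1 + \max(\omega,\Delta(\omega)) \right],
\end{equation*}
for both sign choices. The remaining task is to show that this maximum is attained on the finite candidate set $\Omega$.

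The key structural observation is that $f^\pm$ is concave on $[-1,1]$: the term $p^T \cdot w^\pm(\omega)$ is affine in $\omega$, while $\Delta(\omega) = \max_{|D^{-1}\vec{y}|_2 \le 1} |\vec{y} - \omega \vec{c}|_2$ is a pointwise supremum of convex (norm) functions of $\omega$, hence convex; therefore $\max(\omega,\Delta(\omega))$ is convex and subtracting it leaves $f^\pm$ concave. A concave function on a compact interval attains its maximum at an endpoint, at a smooth interior critical point, or at a concave kink, so I would enumerate these candidates. The endpoints give $\pm 1$. The transitions between the branch where $\max(\omega,\Delta) = \omega$ and the branch where $\max(\omega,\Delta) = \Delta$ occur at $\omega = \Delta(\omega)$; solving this equation in each of the two sub-branches of $\Delta$ reproduces the two cases in the definition of $\omega_2$, giving the candidates $\pm\omega_2$. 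The kink of $|\omega|$ inside the linear branch $\Delta(\omega) = d_3 + c_3|\omega|$ contributes $\omega = 0$. Finally, the smooth critical-point equation $\partial_\omega f^\pm = 0$ inside the $C^1$ branch of $\Delta$ yields, after squaring and rearranging, exactly $\pm\omega_1$. A direct differentiation shows that at $\omega_0 := (d_2^2-d_3^2)/(d_3 c_3)$ both sub-branches of $\Delta$ share the common slope $c_3$, so $\Delta$ is $C^1$ at $\pm\omega_0$ and these points are not kinks and contribute no additional candidates. Together these give exactly $\Omega$, and concavity then yields $\max_{\omega \in [-1,1]} f^\pm(\omega) = \max_{\omega \in \Omega} f^\pm(\omega)$, which is precisely Eq.~\eqref{eq:qubit-compatibility}.

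The main obstacle is the careful verification of the $C^1$ matching of $\Delta$ at $\pm\omega_0$---which is what removes those otherwise natural candidates from the optimal set---together with the treatment of the many degenerate parameter regimes: $c_3 = 0$ (centred ellipsoid, $\Delta$ constant, so the formula for $\omega_1$ must be read as a limit), $d_2 = d_3$ (where $\omega_1$ collapses to $0$), $c_3 = 1$ (where the alternative formula for $\omega_2$ diverges), and regimes in which $\omega_1$ or $\omega_2$ leaves $[-1,1]$ or becomes ill-defined (so that the corresponding critical point or kink does not lie in the feasible domain and must be removed by the intersection $\Omega \cap [-1,1]$). Each such case is handled either by continuity in the parameters or by noting that the candidate is dominated by one of the remaining ones. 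The remaining steps---differentiating $f^\pm$ on the $C^1$ branch, solving $\omega = \Delta(\omega)$ on each sub-branch, and evaluating $f^\pm$ at the finite list of candidates---are algebraic manipulations of the expressions provided by Lemma~\ref{lmm:qubit-threshold} and present no conceptual difficulty.
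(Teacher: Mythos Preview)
Your proposal is correct and follows the same overall strategy as the paper: reduce to the one-dimensional optimization over $\omega \in [-1,1]$ and show the maximum lies in the finite set $\Omega$. The execution, however, is somewhat cleaner than the paper's. The paper writes $f^\pm = \min(g^\pm, h^\pm)$ with $g^\pm(\omega) = p^T\cdot w^\pm(\omega) - \tfrac12(1+|\omega|)$ and $h^\pm(\omega) = p^T\cdot w^\pm(\omega) - \tfrac12(1+\Delta(\omega))$, then establishes quasi-concavity of $h^\pm$ by explicitly computing and signing its second derivative, locates the maxima of $g^\pm$ and $h^\pm$ separately, and finally argues that the maximum of $\min(g^\pm,h^\pm)$ occurs either at one of these or at the two intersection points $\pm\omega_2$. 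Your structural observation that $\Delta(\omega)$ is convex as a pointwise supremum of norms (hence $f^\pm$ is concave outright, not merely quasi-concave) bypasses the derivative computation entirely and lets you treat $f^\pm$ as a single concave function whose maximizer is an endpoint, a smooth critical point, or a kink---yielding the same candidate list $\{0,\pm\omega_1,\pm\omega_2,\pm 1\}$. Your verification that the two sub-branches of $\Delta$ meet with matching slope $c_3$ at $\pm\omega_0$ (so these are not additional kinks) matches what the paper establishes implicitly via the continuity of $dh^\pm/d\omega$. One small point: the paper's Eq.~\eqref{eq:qubit-threshold} reads $\max(\omega,\Delta(\omega))$, but the appendix proof and the original derivation use $\max(|\omega|,\Delta(\omega))$; your argument goes through either way since both are convex, but the $\pm\omega_2$ symmetry you invoke relies on the $|\omega|$ version.
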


As  applications   of  Theorem~\ref{thm:qubit-compat},   let  us
explicitly   characterize  the   sets   of  binary   conditional
probability distributions compatible  with two relevant examples
of    qubit   $D_2$-covariant    channels:    the   Pauli    and
amplitude-damping channels.

Any     Pauli     channel      can     be     written     as
$\mathcal{P}^{\vec\lambda}  :  \rho  \to  \lambda_0  \rho  +
\sum_{k=1}^3  \lambda_k   \sigma_k  \rho  \sigma_k^\dagger$,
where $\vec\sigma = (\sigma_x,  \sigma_y, \sigma_z)$ are the
Pauli  matrices.   One  has  that  $c_3  =  0$  and  $d_3  =
\max\limits_{k \in [1,3]} |2(\lambda_0 + \lambda_k) - 1| \ge
d_2$, thus $\omega_1 = \infty$  and $\omega_2 = d_3$ and the
maximum  in  Eq.~\eqref{eq:qubit-compatibility} is  attained
for   $\omega  =   \pm  \omega_2$.    Thus,  upon   applying
Theorem~\ref{thm:qubit-compat},   one   has  the   following
result.

\begin{cor}
  \label{cor:pauli}
  Any given  binary conditional probability distribution  $p$ is
  compatible  with the  Pauli channel  $\ch{P}^{\vec\lambda}$ if
  and only if
  \begin{align*}
    \frac{|p_{1|1}  -   p_{1|2}|}{1-|p_{1|1}  -   p_{2|2}|}  \le
    \max_{k \in [1,3]} |2(\lambda_0 + \lambda_k) - 1|.
  \end{align*}
\end{cor}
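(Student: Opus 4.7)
The plan is to specialize Theorem~\ref{thm:qubit-compat} to $\ch{X} = \ch{P}^{\vec\lambda}$, which reduces the characterization to checking finitely many inequalities. First I would compute the Bloch-representation data $(A,\vec{b})$ of $\ch{P}^{\vec\lambda}$: since Pauli channels are unital, $\ch{P}^{\vec\lambda}(\openone)=\openone$, giving $\vec{b}=0$; and since $\sigma_k \sigma_l \sigma_k^\dagger = \pm \sigma_l$, the matrix $A$ is diagonal in the Pauli basis with entries $A_{kk} = 2(\lambda_0+\lambda_k)-1$. Consequently $\vec{c} = -V^\dagger \vec{b} = 0$, so $c_3 = 0$, and the singular values are $d_k = |2(\lambda_0+\lambda_k)-1|$. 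Relabelling the computational basis so that $d_3 \ge d_2 \ge d_1$, one identifies $d_3 = \max_{k \in [1,3]}|2(\lambda_0+\lambda_k)-1|$.

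Next I would simplify the ingredients of Theorem~\ref{thm:qubit-compat} in the degenerate case $c_3=0$. The formula for $\omega_1$ diverges and thus falls outside $[-1,1]$, while $\omega_2$ reduces through its ``otherwise'' branch to $d_3$. The distance $\Delta(\omega)$ in Lemma~\ref{lmm:qubit-threshold} collapses to the constant $d_3$ (the ``if'' branch becomes vacuous since $d_2^2 - d_3^2 \le 0$), so $\thr{\ch{X}}{w^\pm(\omega)} = \frac{1}{2}[1+\max(|\omega|, d_3)]$ and the set $\Omega \cap [-1,1]$ collapses to $\{0,\pm d_3,\pm 1\}$.

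Then I would evaluate the inequality $p^T\cdot w^\pm(\omega) - \thr{\ch{X}}{w^\pm(\omega)} \le 0$ at each element of $\Omega$. Writing $a := p_{1|1}+p_{2|2}-1 = p_{1|1}-p_{1|2}$ and $b := p_{1|1}-p_{2|2}$, a short calculation gives $p^T\cdot w^+(\omega) = \tfrac{1}{2}(1 + a + \omega b)$ and $p^T\cdot w^-(\omega) = \tfrac{1}{2}(1 - a - \omega b)$. At $\omega = \pm 1$ the threshold equals $1$ and the constraints are trivial; at $\omega = 0$ one obtains $|a| \le d_3$, which is dominated by the constraints at $\omega = \pm d_3$; and at $\omega = \pm d_3$ the pair of witnesses $w^+$ and $w^-$ jointly yield $|a| \le d_3(1-|b|)$. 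Dividing through by $1-|b|$ and substituting back for $a, b$ gives exactly the claim of Corollary~\ref{cor:pauli}.

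The only delicate step is the careful handling of the $c_3 \to 0$ limit in the definitions of $\omega_1$, $\omega_2$, and $\Delta(\omega)$, since the formulas in Lemma~\ref{lmm:qubit-threshold} are piecewise and the $\omega_1$ expression is nominally $0/0$; one must verify that the limit either falls outside $[-1,1]$ or is already captured by $\omega_2 = d_3$, so that no extra binding constraints are missed. Once that is checked, the remaining work is the routine algebraic manipulations above.
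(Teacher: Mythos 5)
Your proposal is correct and follows essentially the same route as the paper: the paper's own argument (given in the paragraph preceding the corollary, with no separate appendix proof) likewise computes $\vec{b}=0$, hence $c_3=0$ and $d_3=\max_{k\in[1,3]}|2(\lambda_0+\lambda_k)-1|\ge d_2$, notes $\omega_1=\infty$ and $\omega_2=d_3$, and then applies Theorem~\ref{thm:qubit-compat} with the maximum attained at $\omega=\pm\omega_2$. Your explicit evaluation of $p^T\cdot w^\pm(\omega)$ at the points of $\Omega$ and the handling of the degenerate $c_3=0$ formulas just fill in details the paper leaves implicit.
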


Any   amplitude-damping   channel    can   be   written   as
$\mathcal{A}^\lambda(\rho)    =   \sum_{k=0}^1    A_k   \rho
A_k^\dagger$,     where    $A_0     =    \ketbra{0}{0}     +
\sqrt{\lambda}\ketbra{1}{1}$  and  $A_1  =  \sqrt{1-\lambda}
\ketbra{0}{1}$.  As shown in the Appendix, one has that $c_3
=   1-\lambda$  and   $d_3  =   \lambda$,  $d_2   =  d_1   =
\sqrt{\lambda}$,     and     thus     the     maximum     in
Eq.~\eqref{eq:qubit-compatibility} is attained for $\omega =
\pm  \omega_1$  or $\omega  =  \pm1$.   Thus, upon  applying
Theorem~\ref{thm:qubit-compat},   one   has  the   following
result, formally proved in the Appendix.

\begin{cor}
  \label{cor:ampdamp}
  Any given  binary conditional probability distribution  $p$ is
  compatible with the amplitude-damping channel $\ch{A}^\lambda$
  if and only if
  \begin{align*}
    \left(    \sqrt{p_{1|2}p_{2|1}}   -    \sqrt{p_{1|1}p_{2|2}}
    \right)^2 \le \lambda.
  \end{align*}
\end{cor}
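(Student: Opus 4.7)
The plan is to specialize Theorem~\ref{thm:qubit-compat} to the amplitude-damping channel $\ch{A}^\lambda$, derive a polynomial inequality from it, and then reconcile this inequality with the claimed form via an algebraic identity.

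First, I would compute the Pauli parametrization of $\ch{A}^\lambda$ by direct evaluation on the Pauli operators. A short calculation gives $\ch{A}^\lambda(\sigma_x) = \sqrt{\lambda}\,\sigma_x$, $\ch{A}^\lambda(\sigma_y) = \sqrt{\lambda}\,\sigma_y$, $\ch{A}^\lambda(\sigma_z) = \lambda\,\sigma_z$, and $\ch{A}^\lambda(\openone) = \openone + (1-\lambda)\sigma_z$, yielding $A = \mathrm{diag}(\sqrt{\lambda},\sqrt{\lambda},\lambda)$ and $\vec{b} = (0,0,1-\lambda)^T$. Choosing a polar decomposition that makes the third axis positive, one gets $d_1=d_2=\sqrt{\lambda}$, $d_3=\lambda$, $c_1=c_2=0$ and $c_3 = 1-\lambda$, which in particular confirms $D_2$-covariance. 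Feeding these into Lemma~\ref{lmm:qubit-threshold}, the transition value $(d_2^2-d_3^2)/(d_3 c_3)$ evaluates to $1$, so on the entire domain $|\omega|\le 1$ one stays in the first branch and $\Delta(\omega) = \sqrt{\lambda+(1-\lambda)\omega^2}$, which dominates $|\omega|$. Hence $\thr{\ch{A}^\lambda}{w^\pm(\omega)} = \tfrac{1}{2}\bigl[1 + \sqrt{\lambda+(1-\lambda)\omega^2}\bigr]$.

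Next I would apply Theorem~\ref{thm:qubit-compat}. Since $\omega_2$ simplifies to $1$, it coincides with the endpoint $\omega=\pm 1$, where the inequality is automatic (both sides equal $1$). The only nontrivial condition comes from $\omega=\pm\omega_1$. Setting the derivative of $p^T\cdot w^\pm(\omega) - \thr{\ch{A}^\lambda}{w^\pm(\omega)}$ to zero and using concavity to substitute back, one arrives, after simplification, at
\[
(1-\lambda)(p_{1|1}-p_{2|2})^2 + \lambda(p_{1|1}+p_{2|2}-1)^2 \le \lambda(1-\lambda),
\]
with the analogous inequality from $w^-$ reducing to the same thing thanks to the sum-to-one relations $p_{2|i} = 1-p_{1|i}$. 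Writing $a:=p_{1|1}$ and $b:=p_{1|2}$, this is equivalent, by expansion, to $(a+b-\lambda)^2 \le 4(1-\lambda)ab$.

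To match the claimed form, I would verify the polynomial identity
\[
(a+b-\lambda)^2 - 4(1-\lambda)\,ab \;=\; (\lambda - a - b + 2ab)^2 - 4ab(1-a)(1-b),
\]
which is a routine expansion. The right-hand side is a difference of squares, and the claimed inequality $(\sqrt{b(1-a)}-\sqrt{a(1-b)})^2 \le \lambda$, after isolating the square root and squaring, becomes exactly $(\lambda-a-b+2ab)^2 \le 4ab(1-a)(1-b)$. The main obstacle is the case analysis needed to reconcile the validity regimes of the two squarings, together with the possibility $\omega_1\notin[-1,1]$. Specifically, one must check that (i) when $|p_{1|1}-p_{2|2}|>1-\lambda$ (so $\omega_1\notin[-1,1]$ and the derived condition is trivially met), we also have $a+b-2ab\le\lambda$, so the claim is likewise trivially met; and conversely (ii) when the derived inequality is violated, we are automatically in the regime $a+b-2ab>\lambda$ where squaring preserves equivalence. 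Both implications follow from the elementary estimate $(u-\lambda)^2/[4(1-\lambda)] \le (u-\lambda)/2$ valid for $\lambda\le u\le 2-\lambda$, which precisely links the sign of $a+b-2ab-\lambda$ to the inclusion $\omega_1\in[-1,1]$.
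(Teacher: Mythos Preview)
Your approach mirrors the paper's: specialize Theorem~\ref{thm:qubit-compat} with $d_2=\sqrt{\lambda}$, $d_3=\lambda$, $c_3=1-\lambda$, observe that $\omega_2=1$ coincides with the endpoint (where the bound is trivially met), and extract the nontrivial condition from $\omega=\pm\omega_1$. The paper then solves the resulting condition as a quadratic in $\lambda$, obtaining $\lambda_-\le\lambda\le\lambda_+$ with $\lambda_\pm=(\sqrt{p_{1|1}p_{2|2}}\pm\sqrt{p_{1|2}p_{2|1}})^2$, and shows the upper bound is redundant via $\lambda_+\ge 1-|p_{1|1}-p_{2|2}|$. Your polynomial-identity route is essentially a repackaging of this quadratic factorization.

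However, your intermediate inequality has the two coefficients swapped: the correct condition at $\omega=\pm\omega_1$ is
\[
\lambda\,(p_{1|1}-p_{2|2})^2 + (1-\lambda)\,(p_{1|1}+p_{2|2}-1)^2 \le \lambda(1-\lambda),
\]
not the version you wrote. It is this corrected form that expands (with $a=p_{1|1}$, $b=p_{1|2}$) to $(a+b-\lambda)^2\le 4(1-\lambda)ab$; your stated form instead expands to $(a+b-(1-\lambda))^2\le 4\lambda ab$. So the expansion step as written is false, though the target $(a+b-\lambda)^2\le 4(1-\lambda)ab$ is right.

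Your case analysis is also shakier than it needs to be. Claim~(ii) as stated (``derived inequality violated $\Rightarrow a+b-2ab>\lambda$'') is false in general: violation can also occur with $\lambda>\lambda_+$, where $a+b-2ab<\lambda$. What rescues the argument is that this branch forces $|p_{1|1}-p_{2|2}|>1-\lambda$, so the $\omega_1$-condition is not in force; this is exactly what the paper's inequality $\lambda_+\ge 1-|p_{1|1}-p_{2|2}|$ encodes. Recasting your endgame in terms of $\lambda_\pm$ (the roots of your own quadratic) makes the two regime checks one-liners and avoids the auxiliary estimate you invoke.
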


\section{Universally-covariant          commutativity-preserving
  channels}
\label{sec:covariant}

Let us  now move  to the arbitrary  dimensional case.   We trade
generality regarding the dimension  for generality regarding the
symmetry of the channel,  and assume {\em universal covariance}.
A  channel  $\ch{X}  :  \lin{\hilb{H}}  \to  \lin{\hilb{K}}$  is
universally  covariant  if  and  only  if  there  exist  unitary
representations   $U_g   \in   \lin{\hilb{H}}$  and   $V_g   \in
\lin{\hilb{K}}$ of the special unitary  group $SU(d)$ with $d :=
\dim{\hilb{H}}$,   such  that   for   every   state  $\rho   \in
\lin{\hilb{H}}$ one has
\begin{align}
  \label{eq:universal-covariance}
  \ch{X}(U_g \rho U_g^\dagger) = V_g \ch{X}(\rho) V_g^\dagger.
\end{align}

From universal covariance it  immediately follows that {\em any}
orthonormal   pure  encoding   attains  the   witness  threshold
$\thr{\ch{X}}{w^{\pm}(\omega)}$                               in
Eq.~\eqref{eq:discrimination}.  Indeed, for any orthonormal pure
states $\{ \phi_i \}$ let $U$ be the unitary such that $\phi_i =
U \ketbra{i}{i} U^\dagger$. Then one has
\begin{align*}
  & \N{\ch{X} \left( \helstrom{\omega}{\phi_0}{\phi_1} \right)}_1 \\
  = & \N{\ch{X} \left( \helstrom{\omega}{ U \ketbra{0}{0} U^\dagger}{ U \ketbra{1}{1} U^\dagger } \right)}_1 \\
  = & \N{V \ch{X} \left( \helstrom{\omega} {\ketbra{0}{0}} {\ketbra{1}{1}} \right)  V^\dagger }_1 \\
  =   &  \N{\ch{X}   \left(  \helstrom{\omega}   {\ketbra{0}{0}}
      {\ketbra{1}{1}} \right) }_1,
\end{align*}
where      the      second       equality      follows      from
Eq.~\eqref{eq:universal-covariance},  and  the  third  from  the
invariance of trace distance under unitary transformations. Then
we have the following result.

\begin{lmm}
  \label{lmm:universal-covariant}
  The  witness  threshold $\thr{\ch{X}}{w^\pm(\omega)}$  of  any
  universally covariant channel $\ch{X}$ is given by
  \begin{align}
    \label{eq:covariant-threshold}
    \thr{\ch{X}}{w^{\pm}(\omega)} = \frac12 \left[ 1 + \N{\ch{X}
        (\helstrom{\omega}{\ketbra{0}{0}}{  \ketbra{1}{1}})  }_1
    \right].
  \end{align}
  The optimal encoding  is given by any  pair of orthonormal
  pure states.
\end{lmm}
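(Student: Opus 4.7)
The plan is to combine Lemma~\ref{lmm:binary-threshold} with the defining property of universal covariance to show that, in the universally covariant case, the supremum over orthonormal pure encodings collapses to a single value attained by any such encoding.

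First, I invoke Lemma~\ref{lmm:binary-threshold} to restrict the optimization in the definition of $\thr{\ch{X}}{w^\pm(\omega)}$ to pairs of orthonormal pure states $\{\phi_0, \phi_1\}$. For any such pair, I pick a unitary $U \in \lin{\hilb{H}}$ that sends the computational basis into $\{\ket{\phi_0}, \ket{\phi_1}\}$, so that $\phi_i = U \ketbra{i}{i} U^\dagger$. Since $\helstrom{\omega}{\cdot}{\cdot}$ is linear in both arguments, conjugation by $U$ factors out: $\helstrom{\omega}{\phi_0}{\phi_1} = U\,\helstrom{\omega}{\ketbra{0}{0}}{\ketbra{1}{1}}\,U^\dagger$.

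Next, I apply the universal covariance property in Eq.~\eqref{eq:universal-covariance} to pull $U$ through $\ch{X}$: for the unitary representation $g \mapsto U_g$ and $g \mapsto V_g$ associated with $U$, I obtain $\ch{X}(U\,\sigma\,U^\dagger) = V\,\ch{X}(\sigma)\,V^\dagger$ for the appropriate $V$. The trace norm is unitarily invariant, so conjugation by $V$ is harmless, and the $1$-norm in Eq.~\eqref{eq:discrimination} evaluated on $\{\phi_0, \phi_1\}$ equals the same norm evaluated on $\{\ketbra{0}{0}, \ketbra{1}{1}\}$, yielding Eq.~\eqref{eq:covariant-threshold}. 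Since this equality holds for \emph{every} orthonormal pure pair, each such pair attains the maximum, proving the optimality statement.

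The argument is essentially forced by the previous lemma and the algebraic definition of universal covariance, so there is no real obstacle to overcome; the only subtle point is making sure that universal covariance is invoked in its full strength (for every $g \in SU(d)$, not for a restricted subgroup) so that an arbitrary $U$ mapping the computational basis to $\{\ket{\phi_0}, \ket{\phi_1}\}$ is indeed covered. Once this is noted, the chain of equalities already written out in the paragraph preceding the lemma gives the conclusion directly.
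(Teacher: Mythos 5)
Your proposal is correct and follows essentially the same route as the paper: restrict to orthonormal pure encodings via Lemma~\ref{lmm:binary-threshold}, write $\phi_i = U\ketbra{i}{i}U^\dagger$, factor $U$ out of the Helstrom matrix by linearity, pull it through $\ch{X}$ via Eq.~\eqref{eq:universal-covariance}, and use unitary invariance of the trace norm. The paper's argument is exactly this chain of equalities, so no further comment is needed.
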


Equation~\eqref{eq:covariant-threshold} has  a simple dependence
on  $w$  in the  case  when  channel $\ch{X}$  is  commutativity
preserving, i.e. $[\ch{X}(\rho_0), \ch{X}(\rho_1)] = 0$ whenever
$[\ch{\rho}_0,  \ch{\rho}_1] =  0$. Notice  that it  suffices to
check  commutativity  preservation  for pure  states,  indeed  a
channel  $\ch{X}$ is  commutativity  preserving if  and only  if
$[\ch{X}(\phi_0),     \ch{X}(\phi_1)]      =     0$     whenever
$\braket{\phi_1|\phi_0}  =   0$.   Necessity  is   trivial,  and
sufficiency  follows by  assuming $[\ch{\rho}_0,  \ch{\rho}_1] =
0$, and  considering a  simultaneous spectral  decompositions of
$\rho_0  = \sum_k  \mu_k  \phi_k$ and  $\rho_1  := \sum_j  \nu_j
\phi_j$. Then one has
\begin{align*}
  \left[  \ch{X}(\rho_0), \ch{X}(\rho_1)  \right] &=  \sum_{k,l}
  \mu_k \nu_l \left[ \ch{X}(\phi_k), \ch{X}(\phi_l) \right]\\ &=
  0,
\end{align*}
where   the  last   inequality  follows   from  the   fact  that
$\braket{\phi_l|\phi_k}  =  \delta_{k,l}$.   For  a  universally
covariant   channel  $\ch{X}$,   it  immediately   follows  from
Eq.~\eqref{eq:universal-covariance}  that it  suffices to  check
commutativity preservation  for an arbitrary pair  of orthogonal
pure states.

In this case $\ch{X}(\ketbra{0}{0})$ and $\ch{X}(\ketbra{1}{1})$
admit a common basis of eigenvectors $\{ \ket{k} \}$, and thus a
spectral     decomposition     of    the     Helstrom     matrix
$\ch{X}(\helstrom{\omega}{\ketbra{0}{0}}{   \ketbra{1}{1}})$  is
given by
\begin{align*}
  \ch{X}(\helstrom{\omega}{\ketbra{0}{0}}{   \ketbra{1}{1}})   =
  \sum_k (\alpha_k \omega + \beta_k) \ketbra{k}{k},
\end{align*}
where   $\alpha_k$   and   $\beta_k$  are   the   half-sum   and
half-difference     of     the    $k$-th     eigenvectors     of
$\ch{X}(\ketbra{0}{0})$       and       $\ch{X}(\ketbra{1}{1})$,
respectively.     Therefore   Eq.~\eqref{eq:covariant-threshold}
becomes
\begin{align*}
  \thr{\ch{X}}{w^\pm(\omega)}  =  \frac12   \left(  1  +  \sum_k
    |\alpha_k \omega + \beta_k| \right).
\end{align*}

Then, the  optimization problem  in Eq.~\eqref{eq:compatibility}
becomes piece-wise linear,  thus the maximum is  attained on the
intersections of the piece-wise components given by $\gamma_k :=
\beta_k/\alpha_k$ when  such values belongs to  the domain $[-1,
1]$, or  on its extrema.   We can  then provide our  second main
result, namely  a complete  closed-form characterization  of the
set    $\euset{2}{2}{\ch{X}}$    of   conditional    probability
distributions   compatible    with   any   arbitrary-dimensional
universally-covariant commutativity-preserving channel $\ch{X}$.

\begin{thm}
  \label{thm:covariant-compatibility}
  Any given  binary conditional probability distribution  $p$ is
  compatible     with     any    given     arbitrary-dimensional
  universally-covariant     commutativity-preserving     channel
  $\ch{X}$ if and only if
  \begin{align*}
    \begin{cases}
      |p_{1|1} - p_{1|2}| \le \sum_k |\beta_k|,\\
      |p_{1|1} -  p_{1|2}| \le \N{ \ch{X}(  \helstrom {\gamma_k}
        {\ketbra{0}{0}} {\ketbra{1}{1}}) }_1 - \gamma_k |p_{1|1}
      - p_{2|2}|,
    \end{cases}
  \end{align*}
  for any $k$ such that $\gamma_k \in [-1, 1]$.
\end{thm}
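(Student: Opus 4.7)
The plan is to reduce the compatibility test of Lemma~\ref{thm:compatibility} to a one-dimensional piecewise-linear inequality in $\omega$, by combining the reduction of witnesses to $w^\pm(\omega)$ from Section~\ref{sec:binary} with the explicit threshold formula of Lemma~\ref{lmm:universal-covariant} and the spectral structure granted by commutativity preservation.

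First I would observe that $\thr{\ch{X}}{w^-(\omega)} = \thr{\ch{X}}{w^+(\omega)}$, since relabeling POVM outcomes turns one Helstrom discrimination problem into the other. Lemma~\ref{lmm:universal-covariant} then gives the common value $\frac{1}{2}[1 + \N{\ch{X}(\helstrom{\omega}{\ketbra{0}{0}}{\ketbra{1}{1}})}_1]$, which, upon diagonalizing the Helstrom matrix in the common eigenbasis of $\ch{X}(\ketbra{0}{0})$ and $\ch{X}(\ketbra{1}{1})$, becomes $\frac{1}{2}(1+h(\omega))$ with $h(\omega) := \sum_k |\alpha_k \omega + \beta_k|$. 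A direct computation of $p^T \cdot w^\pm(\omega)$ using $p_{1|i}+p_{2|i}=1$ then yields $p^T \cdot w^\pm(\omega) - \thr{\ch{X}}{w^\pm(\omega)} = \frac{1}{2}[\pm(a+b\omega) - h(\omega)]$, where $a := p_{1|1}-p_{1|2}$ and $b := p_{1|1}-p_{2|2}$. Lemma~\ref{thm:compatibility} therefore collapses to the single scalar condition
\[|a + b\omega| \le h(\omega), \qquad \forall\, \omega \in [-1,1].\]

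The function $|a + b\omega| - h(\omega)$ is piecewise linear, so its supremum over $[-1,1]$ is attained at one of its breakpoints---namely $\omega = -a/b$ (where the left-hand side vanishes and the inequality is automatic), $\omega = \gamma_k := -\beta_k/\alpha_k$ (where the $k$-th summand of $h$ changes slope), or the endpoints $\omega = \pm 1$. The endpoints yield $|a \pm b| \le h(\pm 1) = \sum_k (\alpha_k \pm \beta_k) = 1$ by trace preservation, which is automatic for any valid conditional probability distribution. Evaluating at $\omega = 0$ gives the first condition of the theorem, $|a| \le h(0) = \sum_k |\beta_k|$.

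The main subtlety is in casting the remaining breakpoint inequalities in the form stated in the theorem, where $(p_{1|1}-p_{2|2})$ appears in absolute value. For this I would invoke universal covariance a second time: the $SU(d)$ element swapping $\ket{0}$ and $\ket{1}$ (up to phase) acts on the common eigenbasis as an involution $\pi$ satisfying $\alpha_{\pi(k)} = \alpha_k$ and $\beta_{\pi(k)} = -\beta_k$. Consequently $h$ is even in $\omega$ and the set $\{\gamma_k\}$ is symmetric under negation, so $\gamma_k$ and $-\gamma_k$ belong to $[-1,1]$ simultaneously. The two breakpoint inequalities $|a + b\gamma_k| \le h(\gamma_k)$ and $|a - b\gamma_k| \le h(\gamma_k)$ are jointly equivalent to $|a| + |\gamma_k||b| \le h(\gamma_k)$, which rewritten as $|a| \le h(\gamma_k) - \gamma_k|b|$ gives the strong bound for $\gamma_k \ge 0$ (and a redundant weak one for $\gamma_k < 0$). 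Listed over all $\gamma_k \in [-1,1]$, this is precisely the theorem's second family of conditions.
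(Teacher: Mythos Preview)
Your argument is correct and follows the same piecewise-linear strategy the paper sketches in the paragraph preceding the theorem; you have simply filled in the steps the paper leaves implicit, most notably the verification that the endpoints $\omega=\pm1$ are automatic and the covariance-based evenness argument that justifies the absolute value on $p_{1|1}-p_{2|2}$. One notational caveat: the paper sets $\gamma_k:=\beta_k/\alpha_k$ rather than your $-\beta_k/\alpha_k$, but since you correctly establish that the breakpoint set is symmetric under negation, the resulting family of inequalities is the same either way.
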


As  applications  of  Theorem~\ref{thm:covariant-compatibility},
let  us explicitly  compute the  binary conditional  probability
distributions   compatible  with   any  erasure,   depolarizing,
universal  optimal  $1 \to  2$  cloning,  and universal  optimal
transposition  channels.  As   discussed  before,  commutativity
preservation  can  be  immediately  verified for  all  of  these
channels   by  checking   that  $\left[   \ch{X}(\ketbra{0}{0}),
  \ch{X}(\ketbra{1}{1}) \right] = 0$.

Any erasure channel  can be written as  $\ch{E}_d^\lambda : \rho
\to \lambda \rho \oplus (1-\lambda)  \phi$, where $\phi$ is some
pure  state.   One  can  compute  that  $\vec{\alpha}  =  \left(
  \frac{\lambda}2,  \frac{\lambda}2,  0  \times  d-2,  1-\lambda
\right   )$   and   $\vec{\beta}   =   \left(   \frac{\lambda}2,
  -\frac{\lambda}2, 0  \times d-1  \right)$, thus  upon applying
Theorem~\ref{thm:covariant-compatibility} one  has the following
Corollary.

\begin{cor}
  \label{cor:erasure}
  Any given  binary conditional probability distribution  $p$ is
  compatible with the erasure  channel $\ch{E}_d^\lambda$ if and
  only if
  \begin{align*}
    | p_{1|1} - p_{1|2} | \le \lambda.  
  \end{align*}
\end{cor}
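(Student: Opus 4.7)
The plan is to verify that Theorem~\ref{thm:covariant-compatibility} applies to $\ch{E}_d^\lambda$ and then to show that every one of its conditions collapses to the single inequality $|p_{1|1}-p_{1|2}|\le\lambda$. First I would check the two hypotheses of the theorem: universal covariance holds with $V_g:=U_g\oplus W$ for any unitary $W$ fixing $\phi$ (the erasure flag), while commutativity on an orthogonal pair is immediate since $\ch{E}_d^\lambda(\ketbra{0}{0})$ and $\ch{E}_d^\lambda(\ketbra{1}{1})$ are jointly diagonal in the basis $\{\ket{0},\dots,\ket{d-1},\ket{\phi}\}$. With the spectra recorded just before the Corollary, $\sum_k|\beta_k|=\lambda/2+\lambda/2=\lambda$, so the first inequality of Theorem~\ref{thm:covariant-compatibility} is precisely the target condition.

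The bulk of the argument is then to show that the second-line inequalities, indexed by the ratios $\gamma_k:=\beta_k/\alpha_k\in[-1,1]$, add nothing. The non-degenerate ratios are $\gamma_1=1$, $\gamma_2=-1$, and $\gamma_{d+1}=0$; for $k=3,\dots,d$ one has $\alpha_k=\beta_k=0$, so the piecewise-linear term $|\alpha_k\omega+\beta_k|$ vanishes identically and contributes no breakpoint to the maximization underlying Theorem~\ref{thm:covariant-compatibility}. A short direct computation gives $\N{\ch{E}_d^\lambda(\helstrom{\gamma}{\ketbra{0}{0}}{\ketbra{1}{1}})}_1=\lambda+(1-\lambda)|\gamma|$ on $[-1,1]$. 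At $\gamma=0$ this reproduces $|p_{1|1}-p_{1|2}|\le\lambda$; at $\gamma=-1$ the right-hand side becomes $1+|p_{1|1}-p_{2|2}|\ge 1$, so the inequality is automatic; at $\gamma=+1$ it reads $|p_{1|1}-p_{1|2}|+|p_{1|1}-p_{2|2}|\le1$.

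The last inequality is the only non-trivial point. I would dispatch it by noting that $p_{1|1}-p_{2|2}=p_{1|1}+p_{1|2}-1$ and splitting into the four sign cases of $(p_{1|1}-p_{1|2},p_{1|1}+p_{1|2}-1)$; in each case the sum reduces to $2p_{1|1}-1$, $1-2p_{1|2}$, $2p_{1|2}-1$, or $1-2p_{1|1}$, all bounded by $1$ since $p_{j|i}\in[0,1]$. The only real obstacle in the whole argument is the bookkeeping of the degenerate coordinates $\alpha_k=\beta_k=0$ (whose $\gamma_k$ is formally undefined); once these are recognized as vacuous, every condition in Theorem~\ref{thm:covariant-compatibility} either coincides with $|p_{1|1}-p_{1|2}|\le\lambda$ or is automatically satisfied, yielding the stated equivalence.
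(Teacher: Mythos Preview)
Your proposal is correct and follows the same approach as the paper: compute the eigenvalue data $\vec\alpha$ and $\vec\beta$ for the erasure channel and apply Theorem~\ref{thm:covariant-compatibility}. The paper records the spectra and then states the corollary without further comment; you go one step further and explicitly verify that the breakpoint conditions at $\gamma=\pm1$ are automatically satisfied by any stochastic matrix, which the paper leaves implicit (it holds for every channel, since $\N{\ch{X}(\helstrom{\pm1}{\ketbra{0}{0}}{\ketbra{1}{1}})}_1=1$).
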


Any depolarizing  channel can be written  as $\ch{D}_d^\lambda :
\rho \to \lambda \rho  + (1-\lambda) \frac{\openone}d$.  One can
compute   that    $\vec{\alpha}   =    \left(\frac{\lambda}2   +
  \frac{1-\lambda}d  \times  2,   \frac{1-\lambda}d  \times  d-2
\right   )$   and   $\vec{\beta}  =   \left(   -\frac{\lambda}2,
  \frac{\lambda}2,  0 \times  d-2 \right)$,  thus upon  applying
Theorem~\ref{thm:covariant-compatibility} one  has the following
Corollary.

\begin{cor}
  \label{cor:depolarizing}
  Any given  binary conditional probability distribution  $p$ is
  compatible with the depolarizing channel $\ch{D}_d^\lambda$ if
  and only if
  \begin{equation*}
    \left\{\begin{split}
        &|p_{1|1} - p_{1|2}| \le \lambda,\\
        &         \frac{|p_{1|1}-p_{1|2}|}{1-|p_{1|1}-p_{2|2}|}                \le \frac{d\lambda}{2-2\lambda+d\lambda}.
    \end{split}\right.
  \end{equation*}
\end{cor}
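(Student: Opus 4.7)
The plan is to apply Theorem~\ref{thm:covariant-compatibility} directly. To do so, I first need to check that $\ch{D}_d^\lambda$ satisfies its hypotheses, then compute the spectral data $\vec{\alpha}$, $\vec{\beta}$ in a common eigenbasis of $\ch{D}_d^\lambda(\ketbra{0}{0})$ and $\ch{D}_d^\lambda(\ketbra{1}{1})$, and finally evaluate the resulting finite set of constraints.

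Universal covariance of $\ch{D}_d^\lambda$ is standard, since $U\openone U^\dagger=\openone$ for any unitary $U$, so one can take $V_g=U_g$ in Eq.~\eqref{eq:universal-covariance}. Commutativity preservation follows from the observation, already made in the excerpt, that it suffices to check $[\ch{X}(\ketbra{0}{0}),\ch{X}(\ketbra{1}{1})]=0$: both images are diagonal in the computational basis, so they commute. Concretely, in the $\{\ket{k}\}$ basis, $\ch{D}_d^\lambda(\ketbra{0}{0})$ has eigenvalue $\lambda+(1-\lambda)/d$ on $\ket{0}$ and $(1-\lambda)/d$ elsewhere; $\ch{D}_d^\lambda(\ketbra{1}{1})$ has eigenvalue $\lambda+(1-\lambda)/d$ on $\ket{1}$ and $(1-\lambda)/d$ elsewhere. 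Taking half-sums and half-differences yields exactly the vectors
\begin{align*}
\vec{\alpha}&=\bigl(\tfrac{\lambda}{2}+\tfrac{1-\lambda}{d},\,\tfrac{\lambda}{2}+\tfrac{1-\lambda}{d},\,\tfrac{1-\lambda}{d},\ldots,\tfrac{1-\lambda}{d}\bigr),\\
\vec{\beta}&=\bigl(\tfrac{\lambda}{2},\,-\tfrac{\lambda}{2},\,0,\ldots,0\bigr),
\end{align*}
stated in the excerpt.

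Next I would compute $\gamma_k=\beta_k/\alpha_k$. Only $\gamma_0$ and $\gamma_1=-\gamma_0$ are nontrivial, with $\gamma_0=d\lambda/(2-2\lambda+d\lambda)\in[0,1]$; for $k\geq 2$ the value $\gamma_k=0$ arises. The first inequality of Theorem~\ref{thm:covariant-compatibility} immediately gives $|p_{1|1}-p_{1|2}|\le\sum_k|\beta_k|=\lambda$. For the second family of inequalities, I would compute for $k=0$
\begin{align*}
\N{\ch{D}_d^\lambda(\helstrom{\gamma_0}{\ketbra{0}{0}}{\ketbra{1}{1}})}_1=\sum_k|\alpha_k\gamma_0+\beta_k|,
\end{align*}
using $\alpha_0\gamma_0=\lambda/2$ so that the $k=0$ term equals $\lambda$, the $k=1$ term vanishes, and each of the $d-2$ remaining terms equals $(1-\lambda)\lambda/(2-2\lambda+d\lambda)$. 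Summing and simplifying collapses the expression to $\gamma_0$ itself, so the inequality for $k=0$ becomes $|p_{1|1}-p_{1|2}|\le\gamma_0(1-|p_{1|1}-p_{2|2}|)$, i.e.\ the second condition of the Corollary.

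Finally I would observe that $k=1$ yields, by the symmetry $\gamma_1=-\gamma_0$ and the invariance of the $1$-norm under sign flip, the same constraint, while each $k\geq 2$ (with $\gamma_k=0$) reduces to $|p_{1|1}-p_{1|2}|\le\lambda$, which is already the first condition. Hence the full list of constraints from Theorem~\ref{thm:covariant-compatibility} reduces to the two stated inequalities. The only non-mechanical step is the algebraic collapse of $\sum_k|\alpha_k\gamma_0+\beta_k|$ to $\gamma_0$, but this is a short direct calculation; I anticipate no substantive obstacle.
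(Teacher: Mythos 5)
Your proposal is correct and follows essentially the same route as the paper: verify universal covariance and commutativity preservation, read off $\vec{\alpha}=\bigl(\tfrac{\lambda}{2}+\tfrac{1-\lambda}{d}\times 2,\ \tfrac{1-\lambda}{d}\times d-2\bigr)$ and $\vec{\beta}=\bigl(\pm\tfrac{\lambda}{2},\mp\tfrac{\lambda}{2},0\times d-2\bigr)$, and apply Theorem~\ref{thm:covariant-compatibility}, with your algebraic collapse $\sum_k|\alpha_k\gamma_0+\beta_k|=\gamma_0=\tfrac{d\lambda}{2-2\lambda+d\lambda}$ being exactly the computation the paper leaves implicit. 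The only nuance is that the $k$ with $\gamma_k=-\gamma_0$ yields a constraint that is implied by (rather than literally identical to) the $k$ with $\gamma_k=+\gamma_0$, which does not affect the conclusion.
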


The  universal optimal $1  \to  2$  cloning channel  can  be written  as
$\ch{C}_d^\lambda  :  \rho  \to \frac2{d+1}  P_S  (\rho  \otimes
\openone) P_S$. By explicit computation one has
\begin{align*}
  \ch{C}_d(\ketbra{i}{i}) = \frac{1}{2(d+1)} \sum_k (\ket{k,i} +
  \ket{i,k}) (\bra{k,i} + \bra{i,k}),
\end{align*}
and             therefore             $[\ch{C}_d(\ketbra{0}{0}),
\ch{C}_d(\ketbra{1}{1})] = 0$, thus the universal optimal $1 \to
2$  cloning $\ch{C}_d$  is a  commutativity preserving  channel.
One can compute that $\vec{\alpha}  = \left(\frac1{d + 1} \times
  3, \frac{1}{2(d+1)}  \times 2(d-2)\right)$ and  $\vec{\beta} =
\left(  -\frac1{d+1}, \frac1{d+1},  0,  - \frac1{2(d+1)}  \times
  d-2, \frac1{2(d+1)}  \times d-2  \right)$, thus  upon applying
Theorem~\ref{thm:covariant-compatibility} one  has the following
Corollary.

\begin{cor}
  \label{cor:cloning}
  Any given  binary conditional probability distribution  $p$ is
  compatible  with  the  universal optimal  $1 \to  2$  cloning  channel
  $\ch{C}_d$ if and only if
  \begin{align*}
    |p_{1|1} - p_{1|2}| \le  \frac{d}{d+1}.
  \end{align*}
\end{cor}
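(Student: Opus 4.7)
The plan is to apply Theorem~\ref{thm:covariant-compatibility} directly. The hypothesis of universal covariance of $\ch{C}_d$ follows from the $U\otimes U$-invariance of the symmetric projector $P_S$ (so that $\ch{C}_d(U\rho U^\dagger) = (U\otimes U)\ch{C}_d(\rho)(U\otimes U)^\dagger$), and commutativity preservation is already verified in the text preceding the statement. What remains is to establish the spectral data $(\vec\alpha,\vec\beta)$ announced before the Corollary, substitute them into the two families of conditions of the theorem, and simplify.

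The first and main step is to exhibit a common eigenbasis of $\ch{C}_d(\ketbra{0}{0})$ and $\ch{C}_d(\ketbra{1}{1})$. Starting from the explicit form $\ch{C}_d(\ketbra{i}{i}) = \frac{1}{2(d+1)}\sum_k (\ket{k,i}+\ket{i,k})(\bra{k,i}+\bra{i,k})$, I would note that $\ch{C}_d(\ketbra{0}{0})$ is supported on $\{\ket{0,0}\}\cup\{\ket{0,k}_S : k\ge 1\}$ and $\ch{C}_d(\ketbra{1}{1})$ on $\{\ket{1,1}\}\cup\{\ket{1,k}_S : k\neq 1\}$, where $\ket{i,k}_S := (\ket{i,k}+\ket{k,i})/\sqrt{2}$, and that the two supports intersect only in $\ket{0,1}_S$. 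The union therefore decomposes orthogonally into five groups of simultaneous eigenvectors: $\ket{0,0}$ with eigenvalue pair $(\tfrac{2}{d+1},0)$, $\ket{1,1}$ with $(0,\tfrac{2}{d+1})$, $\ket{0,1}_S$ with $(\tfrac{1}{d+1},\tfrac{1}{d+1})$, the $d-2$ vectors $\ket{0,k}_S$ ($k\ge 2$) each with $(\tfrac{1}{d+1},0)$, and the $d-2$ vectors $\ket{1,k}_S$ ($k\ge 2$) each with $(0,\tfrac{1}{d+1})$. Taking half-sums and half-differences reproduces exactly the vectors $\vec\alpha$ and $\vec\beta$ stated before the Corollary.

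With these values in hand, $\sum_k|\beta_k| = \tfrac{2}{d+1} + (d-2)\cdot\tfrac{1}{d+1} = \tfrac{d}{d+1}$, which gives the stated inequality as the first condition of Theorem~\ref{thm:covariant-compatibility}. The ratios $\gamma_k=\beta_k/\alpha_k$ take only the values $\{-1,0,+1\}$: $\gamma_k=0$ reproduces the first condition; $\gamma_k=+1$ gives $\helstrom{1}{\ketbra{0}{0}}{\ketbra{1}{1}}=\ketbra{0}{0}$, whose image under $\ch{C}_d$ has unit trace norm, yielding the constraint $|p_{1|1}-p_{1|2}|+|p_{1|1}-p_{2|2}|\le 1$; and $\gamma_k=-1$ gives $|p_{1|1}-p_{1|2}|\le 1+|p_{1|1}-p_{2|2}|$, which is trivial.

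It then suffices to verify that the $\gamma_k=+1$ constraint is automatic for every valid conditional probability distribution. Using $p_{2|i}=1-p_{1|i}$ one rewrites $|p_{1|1}-p_{2|2}|=|p_{1|1}+p_{1|2}-1|$, and a short case split on the signs of $p_{1|1}-p_{1|2}$ and $p_{1|1}+p_{1|2}-1$ reduces the inequality to either $2\max(p_{1|1},p_{1|2})\le 2$ or $1-2\min(p_{1|1},p_{1|2})\le 1$, both trivially true. Hence the only nontrivial condition is $|p_{1|1}-p_{1|2}|\le d/(d+1)$, which proves the Corollary. The main obstacle is the clean simultaneous diagonalization carried out in the second paragraph; everything else is bookkeeping.
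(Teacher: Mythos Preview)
Your proposal is correct and follows essentially the same route as the paper: compute the spectral data $(\vec\alpha,\vec\beta)$ for $\ch{C}_d(\ketbra{0}{0})$ and $\ch{C}_d(\ketbra{1}{1})$ and plug them into Theorem~\ref{thm:covariant-compatibility}. You supply details the paper leaves implicit---the explicit common eigenbasis and the verification that the $\gamma_k=\pm1$ constraints are automatically satisfied by any conditional distribution---but the underlying argument is identical.
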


The universal transposition channel  can be written as $\ch{T}_d
: \rho \to  \frac1{d+1} \left( \rho^T +  \openone \right)$.  One
can compute that $\vec{\alpha} = \left( \frac3{2(d+1)} \times 2,
  \frac1{d+1}  \times d-2  \right)$  and  $\vec{\beta} =  \left(
  \frac1{2(d+1)}, -\frac1{2(d+1)},  0 \times d-2  \right)$, thus
upon applying  Theorem~\ref{thm:covariant-compatibility} one has
the following Corollary.

\begin{cor}
  \label{cor:transposition}
  Any given  binary conditional probability distribution  $p$ is
  compatible with the universal transposition channel $\ch{T}_d$
  if and only if
  \begin{equation*}
    \left\{\begin{split}
      |p_{1|1} - p_{1|2}| \le \frac1{d+1},\\
      \frac{|p_{1|1}-p_{1|2}|}{1-|p_{1|1}-p_{2|2}|} \le \frac13.
    \end{split}\right.
  \end{equation*}
\end{cor}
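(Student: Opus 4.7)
The plan is to apply Theorem~\ref{thm:covariant-compatibility} directly to the universal transposition channel $\ch{T}_d : \rho \to \frac{1}{d+1}(\rho^T + \openone)$. First I would verify the two hypotheses of the theorem. Universal covariance follows from the identity $(U\rho U^\dagger)^T = \bar U \rho^T U^T$ together with $\bar U U^T = \openone$, so $\ch{T}_d$ intertwines conjugation by $U$ on the input with conjugation by $\bar U$ on the output. Commutativity preservation, as already observed for the other channels in this section, is immediate once one notes that $\ch{T}_d(\ketbra{0}{0}) = \frac{1}{d+1}(\ketbra{0}{0} + \openone)$ and $\ch{T}_d(\ketbra{1}{1}) = \frac{1}{d+1}(\ketbra{1}{1} + \openone)$ are simultaneously diagonal in the computational basis $\{\ket{k}\}_{k=0}^{d-1}$.

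Next I would read off the eigenvalues in this common basis. On $\ket{0}$ the two images have eigenvalues $\frac{2}{d+1}$ and $\frac{1}{d+1}$; on $\ket{1}$ the eigenvalues are $\frac{1}{d+1}$ and $\frac{2}{d+1}$; and for $k\ge 2$ both eigenvalues are $\frac{1}{d+1}$. Taking the half-sums and half-differences yields
\begin{align*}
\vec\alpha &= \left(\tfrac{3}{2(d+1)},\tfrac{3}{2(d+1)},\tfrac{1}{d+1},\dots,\tfrac{1}{d+1}\right),\\
\vec\beta &= \left(\tfrac{1}{2(d+1)},-\tfrac{1}{2(d+1)},0,\dots,0\right),
\end{align*}
in agreement with the values quoted before the corollary. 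The ratios are $\gamma_0=\tfrac13$, $\gamma_1=-\tfrac13$, and $\gamma_k=0$ for $k\ge 2$, all of which lie in $[-1,1]$.

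I would then compute the trace norms appearing in the second family of inequalities in Theorem~\ref{thm:covariant-compatibility}. For $\gamma = \gamma_k = 0$ only the traceless part of the Helstrom matrix survives, giving $\N{\ch{T}_d(\helstrom{0}{\ketbra{0}{0}}{\ketbra{1}{1}})}_1 = \frac{1}{d+1}$, which reproduces the first condition coming from $\sum_k|\beta_k| = \frac{1}{d+1}$. For $\gamma = \pm\tfrac13$, a short calculation using $\ch{T}_d(\helstrom{\gamma}{\ketbra{0}{0}}{\ketbra{1}{1}}) = \frac{1}{d+1}\bigl(\tfrac{1+3\gamma}{2}\ketbra{0}{0} + \tfrac{-1+3\gamma}{2}\ketbra{1}{1} + \gamma\sum_{k\ge 2}\ketbra{k}{k}\bigr)$ yields trace norm $\frac{1}{3}$ in both cases. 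Substituting $\gamma_0=1/3$ (or $\gamma_1=-1/3$, which gives the same bound after taking the absolute values on $p$) into the theorem gives $|p_{1|1}-p_{1|2}| \le \tfrac{1}{3} - \tfrac{1}{3}|p_{1|1}-p_{2|2}|$, which rearranges to $\tfrac{|p_{1|1}-p_{1|2}|}{1-|p_{1|1}-p_{2|2}|} \le \tfrac13$.

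There is no real obstacle here: once Theorem~\ref{thm:covariant-compatibility} is available, the whole argument reduces to the spectral calculation above and a short check that the $\gamma_k=0$ bound is redundant with the $\sum_k|\beta_k|$ bound. The only point requiring mild care is bookkeeping on signs when passing from the signed linear inequalities produced by $w^+$ and $w^-$ to the absolute-value form stated in the corollary, but this is handled uniformly by the symmetry $\gamma_0 = -\gamma_1$.
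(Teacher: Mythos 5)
Your proposal is correct and follows essentially the same route as the paper: compute the common eigenbasis data $\vec\alpha = \bigl(\tfrac{3}{2(d+1)},\tfrac{3}{2(d+1)},\tfrac{1}{d+1},\dots\bigr)$ and $\vec\beta = \bigl(\tfrac{1}{2(d+1)},-\tfrac{1}{2(d+1)},0,\dots\bigr)$, then apply Theorem~\ref{thm:covariant-compatibility}, with the $\gamma_k=0$ constraints redundant and the $\gamma_0=\tfrac13$ constraint giving the second inequality (note only that the $\gamma_1=-\tfrac13$ instance is simply weaker, hence redundant, rather than ``the same bound''). The spectral bookkeeping and the trace-norm value $\tfrac13$ at $\gamma=\pm\tfrac13$ all check out.
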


The  results of  Corollaries~\ref{cor:pauli}, \ref{cor:ampdamp},
\ref{cor:erasure},   \ref{cor:depolarizing},  \ref{cor:cloning},
and      \ref{cor:transposition}      are     summarized      in
Table~\ref{tab:results}.

\begin{center}
  \begin{table}[h]
    \begin{tabular}{| >{$}c<{$} | >{$}c<{$} |}
      \hline
      \ch{X} & p \in \euset{2}{2}{\ch{X}} \\ 
      \hline\hline
      \ch{P}^{\vec\lambda} &  |p_{1|1}-p_{1|2}| \leq \max\limits_{k \in [1,3]} |2 (\lambda_0 + \lambda_k)-1| \\
      \hline
      \ch{A}^\lambda & (\sqrt{p_{1|2}p_{2|1}}-\sqrt{p_{1|1}p_{2|2}})^{2} \leq \lambda \\
      \hline
      \ch{E}_d^\lambda & |p_{1|1} - p_{1|2}| \le \lambda \\
      \hline
      \ch{D}_d^\lambda & \begin{cases} |p_{1|1}-p_{1|2}| \leq \lambda \\ \frac{|p_{1|1}-p_{1|2}|}{1-|p_{1|1}-p_{2|2}|} \le \frac{d\lambda}{2-2\lambda+d\lambda} \end{cases} \\
      \hline
      \ch{C}_d & |p_{1|1}-p_{1|2}| \leq \frac{d}{d+1} \\
      \hline
      \ch{T}_d &  \begin{cases} |p_{1|1}-p_{1|2}| \leq  \frac{1}{d+1} \\ \frac{|p_{1|1}-p_{1|2}|}{1-|p_{1|1}-p_{2|2}|} \le \frac{1}{3} \end{cases}  \\
      \hline
    \end{tabular}
    \caption{Complete closed-form characterization of the set $\euset{2}{2}{\ch{X}}$ of binary conditional probability distributions compatible with channel $\ch{X}$, for $\ch{X}$ given by the Pauli channel $\ch{P}^{\vec\lambda}$, the amplitude damping channel $\ch{A}^\lambda$, the erasure channel $\ch{E}_{d}^{\lambda}$, the depolarizing channel $\ch{D}_{d}^{\lambda}$, the universal $1 \to 2$ cloning channel $\ch{C}_{d}$, and the universal transposer $\ch{T}_{d}$,  as given by Corollaries~\ref{cor:pauli}, \ref{cor:ampdamp}, \ref{cor:erasure},   \ref{cor:depolarizing},  \ref{cor:cloning}, and      \ref{cor:transposition}, respectively.}
    \label{tab:results}
  \end{table}
\end{center}

\section{Cartesian representation}
\label{sec:cartesian}

In this Section  we provide a geometrical  interpretation of our
results.   Binary  conditional   probability  distributions  are
represented  by $2  \times  2$  real matrices,  so  they can  be
regarded  as vectors  in  $\mathbb{R}^4$.  However,  due to  the
normalization constraint $\sum_j p_{j|i} =  1$ for any $i$, they
all lie in a bidimensional affine subspace.  A natural Cartesian
parametrization of such a subspace is given by
\begin{align}
  \label{eq:cartesian}
  p_{j|i} = p(x,y) = \frac{1}{2} \left[ \begin{pmatrix} 1 & 1 \\
      1 &  1 \end{pmatrix}  + x  \begin{pmatrix} 1 &  -1 \\  1 &
      -1  \end{pmatrix} +  y  \begin{pmatrix}  1 &  -1  \\ -1  &
      1 \end{pmatrix} \right],
\end{align}
and binary conditional probability distributions form the square
$|x \pm  y| \le  1$, whose $4$  vertices are  the right-stochastic
matrices with all entries equal to $0$ or $1$.

As it is clear from Eq.~\eqref{eq:cartesian}:
\begin{itemize}
\item  a  permutation  of  the states  $\{  \rho_0,  \rho_1  \}$
  corresponds to the transformation $(x,y) \to (x,-y)$;
\item  a  permutation  of  the  effects  $\{  \pi_0,  \pi_1  \}$
  corresponds to the transformation $(x,y) \to (-x, -y)$;
\item a  permutation of  the states $\{  \rho_0, \rho_1  \}$ and
  effects $\{ \pi_0, \pi_1 \}$ corresponds to the transformation
  $(x,y)   \to   (-x,   y)$.
\end{itemize}
Therefore,    for     any    channel    $\ch{X}$,     the    set
$\euset{2}{2}{\ch{X}}$   of   binary   conditional   probability
distributions   compatible  with   $\ch{X}$  is   symmetric  for
reflections  around   the  $x$   or  $y$   axes  (i.e.,   it  is
$D_2$-covariant).

As   a   consequence  of   our   previous   results,  the   sets
$\euset{2}{2}{\ch{U}_d}$ and $\euset{2}{2}{\ch{F}_d^\lambda}$ of
conditional  probability   distributions  compatible   with  any
unitary and dephasing channels $\ch{U}_d$ and $\ch{F}_d^\lambda$
coincide with the  square $|x \pm y|  \le 1$, for any  $d$ and any
$\lambda$.   The   set  $\euset{2}{2}{\ch{T}}$   of  conditional
probability  distributions   compatible  with   any  trace-class
channel $\ch{T}$ coincide with the segment $x \in [-1, 1]$, $y =
0$.

With the  parametrization in Eq.~\eqref{eq:cartesian},  the sets
of binary conditional  probability distributions compatible with
any Pauli,  amplitude-damping, erasure,  depolarizing, universal
$1 \to 2$ cloning, and universal transposition channels as given
by        Corollaries~\ref{cor:pauli},        \ref{cor:ampdamp},
\ref{cor:erasure},   \ref{cor:depolarizing},  \ref{cor:cloning},
and   \ref{cor:transposition}   respectively,   are   given   in
Table~\ref{tab:cartesian} and depicted in Fig.~\ref{fig:binary}.

\begin{center}
  \begin{table}[h]
    \begin{tabular}{| >{$}c<{$} | >{$}c<{$} |}
      \hline
      \ch{X} & p(x,y) \in \euset{2}{2}{\ch{X}}   \\ 
      \hline\hline
      \ch{P}^{\vec\lambda} &    |y| \leq \max\limits_{k \in [1,3]} |2 (\lambda_0 + \lambda_k)-1| \\
      \hline
      \ch{A}^\lambda & \frac14     \left(\sqrt{1-2y-x^2+y^2}-\sqrt{1+2y-x^2+y^2} \right)^{2} \leq \lambda \\
      \hline
      \ch{E}_d^\lambda & |y| \le \lambda\\
      \hline
      \ch{D}_d^\lambda & \begin{cases} |y| \leq \lambda \\ \frac{|y|}{1-|x|} \le \frac{d\lambda}{2-2\lambda+d\lambda} \end{cases} \\
      \hline
      \ch{C}_d & |y| \leq \frac{d}{d+1} \\
      \hline
      \ch{T}_d & \begin{cases} |y| \leq  \frac{1}{d+1} \\ \frac{|y|}{1-|x|} \le \frac{1}{3} \end{cases} \\
      \hline
    \end{tabular}
    \caption{Cartesian parametrization of the set $\euset{2}{2}{\ch{X}}$ of binary conditional probability distributions compatible with channel $\ch{X}$, for $\ch{X}$ given by the Pauli channel $\ch{P}^{\vec\lambda}$, the amplitude damping channel $\ch{A}^\lambda$, the erasure channel $\ch{E}_{d}^{\lambda}$, the depolarizing channel $\ch{D}_{d}^{\lambda}$, the universal $1 \to 2$ cloning channel 
$\ch{C}_{d}$, and the universal transposer $\ch{T}_{d}$.}
    \label{tab:cartesian}
  \end{table}
\end{center}

\begin{figure}[h]
 \begin{overpic}[width=\columnwidth]{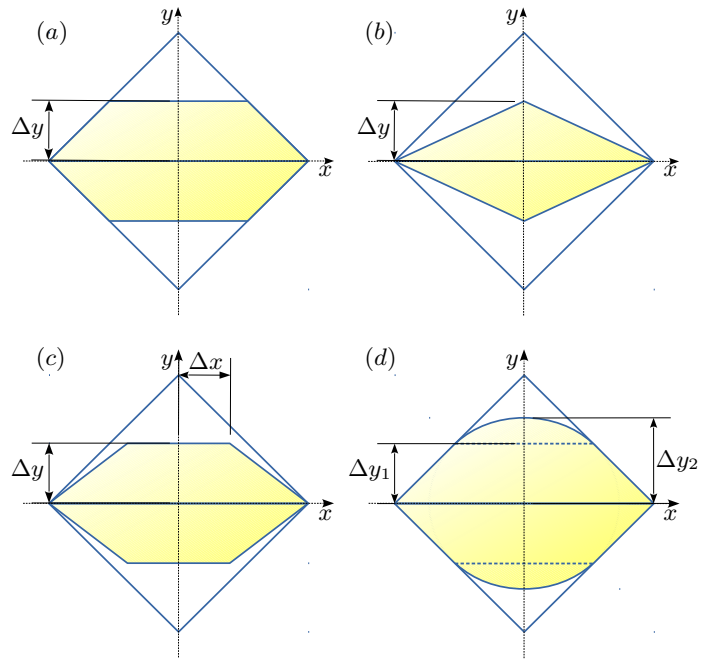}
   \put (2,95) {$(a)$}
   \put (52,95) {$(b)$}
   \put (2,45) {$(c)$}
   \put (52,45) {$(d)$}
   \put (45,74) {$x$}
   \put (97.5,74) {$x$}
   \put (45,21.2) {$x$}
   \put (97.5,21.2) {$x$}
   \put (21,98) {$y$}
   \put (73.5,98) {$y$}
   \put (21,45) {$y$}
   \put (73.5,45) {$y$}
   \put (-1.8, 80.5) {$\Delta y$}
   \put (50.8, 80.5) {$\Delta y$}
   \put (-1.8, 28) {$\Delta y$}
   \put (25.2, 44.5) {$\Delta x$}
   \put (49.5, 28) {$\Delta y_1$}
   \put (96.3, 29.5) {$\Delta y_2$}
 \end{overpic}
 \caption{Cartesian  representation  of  the  space  of  binary
   conditional probability distributions  $p$.  The outer white
   square  denotes  the  polytope  of  all  binary  conditional
   probability distributions.  The  inner yellow region denotes
   the sets  $\euset{2}{2}{\ch{X}}$ of  conditional probability
   distributions  compatible  with:  \textbf{(a)}  the  erasure
   channel  $\ch{X}  =  \ch{E}_d^\lambda$   (for  $\Delta  y  =
   \lambda$)  and  the  universal  optimal $1  \to  2$  cloning
   channel   $\ch{X}    =   \ch{C}_d$   (for   $\Delta    y   =
   \frac{d}{d+1}$);  \textbf{(b)} the  Pauli channel  $\ch{X} =
   \ch{P}^{\vec\lambda}$ (for  $\Delta y  = \max_{k  \in [1,3]}
   |2(\lambda_0    +    \lambda_k)-1|$);    \textbf{(c)}    the
   depolarizing  channel   $\ch{X}  =   \ch{D}_d^\lambda$  (for
   $\Delta  x =  \frac{d-2}{d}(1 -  \lambda)$ and  $\Delta y  =
   \lambda$)  and the  universal optimal  transposition channel
   $\ch{T}_d$ (for $\Delta x = \frac{d-2}{d+1}$ and $\Delta y =
   \frac1{d+1}$);  \textbf{(d)}  the amplitude-damping  channel
   $\ch{A}^\lambda$ (for  $\Delta y_1 = \lambda$  and $\Delta
   y_2 = \sqrt{\lambda}$).}
 \label{fig:binary}
\end{figure}

\section{Conclusions and outlook}
\label{sec:conclusion}

In this  work, we  developed a  device-independent framework
for testing quantum  channels.  The problem was  framed as a
game  involving  an experimenter,  claiming  to  be able  to
produce some quantum channel, and a theoretician, willing to
trust  observed  correlations  only.  The  optimal  strategy
consists  of  i)  all  the  input  states  and  measurements
generating the  extremal correlations that  the experimenter
needs   to    produce,   and   ii)   a    full   closed-form
characterization  of the  correlations  compatible with  the
claim,  that  the theoretician  needs  to  compare with  the
observed   correlations.    For  binary   correlations,   we
explicitly derived the optimal  strategy for the cases where
the claimed channel is a dihedrally-covariant qubit channel,
such  as any  Pauli  and amplitude-damping  channels, or  an
arbitrary-dimensional                  universally-covariant
commutativity-preserving  channel,  such   as  any  erasure,
depolarizing, universal cloning, and universal transposition
channels.

Natural generalisation  of our results include  relaxing the
restriction of binary correlations, that is $m = n = 2$, and
extending the characterization  of $\euset{m}{n}{\ch{X}}$ to
other  classes of  channels.  An  interesting generalisation
would consist of  letting the POVM $\{ \pi_y \}$  depend upon an
input not known during the preparation of $\{ \rho_x \}$, as
is the case  in quantum random access  codes.  Moreover, the
setup in  Eq.~\eqref{eq:chtest} could  be modified  to allow
for  entanglement alongside  $\ch{X}$, or  many parallel  or
sequential uses of channel $\ch{X}$.

We conclude  by remarking that our  results are particularly
suitable for  experimental implementation.  For  any channel
$\ch{X}$ an experimenter  claims to be able  to produce, our
framework  only requires  them  to  prepare orthogonal  pure
input states and perform orthogonal measurements in order to
fully    characterize   $\euset{2}{2}{\ch{X}}$    and   thus
device-independently test $\ch{X}$.

\section*{Data accessibility}

This work does not have any experimental data.

\section*{Competing interests}

The  authors declare  no  competing  interests.

\section*{Author's contributions}

All  authors  equally  contributed to  the  original  ideas,
analytical   derivations,   and   final  writing   of   this
manuscript, and gave final approval for publication.

\section*{Acknowledgements}

We  are grateful  to Alessandro  Bisio, Antonio  Ac\'in, Giacomo
Mauro D'Ariano,  and Vlatko Vedral for  valuable discussions and
suggestions.

\section*{Funding}

M.~D.  acknowledges  support from the Singapore  Ministry of
Education  Academic   Research  Fund   Tier  3   (Grant  No.
MOE2012-T3-1-009).  F.~B acknowledges  support from the JSPS
KAKENHI, No.  26247016.

\appendix

\section{Proofs}

In this Section we prove  those results reported in the previous
Sections for which the proof, being lengthy and not particularly
insightful, had only been  outlined. The numbering of statements
follows that of the previous Sections.

\setcounter{lmm}{3}

\begin{lmm}
  The  witness  threshold $\thr{\ch{X}}{w^\pm(\omega)}$  of  any
  qubit   $D_2$-covariant   channel   $\ch{X}$   is   given   by
  Eq.~\eqref{eq:qubit-threshold} where
  \begin{align*}
    \Delta(\omega) =
    \begin{cases}
      d_2 \sqrt{1 + \frac{c_3^2 \omega^2}{d_2^2 - d_3^2}}, & \textrm{ if } |\omega| < \frac{d_2^2 - d_3^2}{d_3 c_3},\\
      d_3 + c_3 |\omega|, & \textrm{ otherwise.}
    \end{cases}
  \end{align*}
\end{lmm}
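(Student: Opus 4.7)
The plan is to exploit the $D_2$-covariance assumption ($\vec{c} = (0,0,c_3)^T$ with $c_3 \ge 0$ and $d_2 \ge d_1$) to collapse the three-dimensional quadratically-constrained optimization in Eq.~\eqref{eq:distance} to an elementary scalar one. The first step is to show that the maximizer can be taken with $y_1 = 0$. Given any feasible $\vec{y}$, define $\tilde{y}_1 = 0$, $\tilde{y}_2 = \sgn(y_2)\sqrt{y_2^2 + (d_2^2/d_1^2) y_1^2}$, and $\tilde{y}_3 = y_3$ (the degenerate case $d_1 = 0$ already forces $y_1 = 0$ via the pseudoinverse). A direct check gives $\tilde{y}_2^2/d_2^2 + \tilde{y}_3^2/d_3^2 = y_1^2/d_1^2 + y_2^2/d_2^2 + y_3^2/d_3^2 \le 1$, so $\tilde{\vec{y}}$ is feasible, while $\tilde{y}_2^2 - y_2^2 = (d_2^2/d_1^2) y_1^2 \ge y_1^2$ by $d_2 \ge d_1$, hence $|\tilde{\vec{y}} - \omega\vec{c}|_2^2 \ge |\vec{y} - \omega\vec{c}|_2^2$. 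Since the objective also grows as $\vec{y}$ is rescaled outward, the maximum lies on the boundary ellipse $y_2^2/d_2^2 + y_3^2/d_3^2 = 1$.

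Next, I would parametrize that ellipse by $(y_2, y_3) = (d_2 \cos\theta, d_3 \sin\theta)$ and set $s := \sin\theta \in [-1,1]$, yielding
\begin{align*}
f(s) := |\vec{y} - \omega\vec{c}|_2^2 = (d_3^2 - d_2^2) s^2 - 2 d_3 c_3 \omega\, s + d_2^2 + c_3^2 \omega^2.
\end{align*}
When $d_3 < d_2$, the parabola opens downward with unconstrained maximizer $s^\star = d_3 c_3 \omega/(d_3^2 - d_2^2)$; direct substitution gives $f(s^\star) = d_2^2\bigl[1 + c_3^2 \omega^2/(d_2^2 - d_3^2)\bigr]$, matching the square of the first branch exactly when $|s^\star| \le 1$, i.e., $|\omega| \le (d_2^2 - d_3^2)/(d_3 c_3)$. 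Otherwise, since the vertex lies outside $[-1,1]$, the constrained maximum is at the nearest endpoint, namely $s = \sgn(s^\star) = -\sgn(\omega c_3)$, producing $f = (d_3 + c_3|\omega|)^2$ and the second branch. When $d_3 \ge d_2$, $f$ is convex (or linear), so the maximum always sits at an endpoint with the same value $d_3 + c_3|\omega|$, while the threshold $(d_2^2 - d_3^2)/(d_3 c_3)$ is non-positive so the first branch's condition is vacuous. The degenerate case $c_3 = 0$ is covered by the WLOG ordering $d_3 \ge d_2$ stated in the main text, which again returns $\Delta(\omega) = d_3$.

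The main obstacle is the case bookkeeping: one must correctly identify which endpoint is active when $s^\star$ leaves $[-1,1]$, verify that the two branches agree continuously at the transition $|\omega| = (d_2^2 - d_3^2)/(d_3 c_3)$ (where both expressions evaluate to $d_2^2/d_3$), and gracefully handle the boundary cases $c_3 = 0$ and $d_1 = 0$ through the pseudoinverse convention. Once the case analysis is laid out, everything reduces to routine manipulations on a scalar quadratic, and the identification of the optimal $\vec{y}$ claimed in Lemma~\ref{lmm:qubit-threshold} follows by reading off $(y_2, y_3) = (d_2\cos\theta, d_3\sin\theta)$ at the maximizer $s$.
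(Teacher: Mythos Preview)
Your argument is correct and follows essentially the same route as the paper: reduce to the $y_1=0$ slice (you do this by an explicit feasibility-preserving substitution using $d_2\ge d_1$, the paper by a geometric containment-of-ellipses remark), then optimize a scalar quadratic on the boundary ellipse (you parametrize by $s=\sin\theta=y_3/d_3$, the paper by $z=y_3$ directly, which is the same up to rescaling). Your case analysis, including the $d_3\ge d_2$ and $c_3=0$ edge cases and the continuity check at the branch point, matches the paper's conclusions.
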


\begin{proof}
  Under the assumption of $D_2$-covariance, take without loss of
  generality  $\vec{c} =  (0,0,c_3)^T$.   Then  without loss  of
  generality we take  $d_2 \ge d_1$ and $c_3 \ge  0$.  If $c_3 =
  0$ without loss of generality we also take $d_3 \ge d_2$.

  First notice  that $\vec{y}^*$,  which attains the  maximum in
  Eq.~\eqref{eq:qubit-threshold},  lies   in  the   $yz$  plane.
  Indeed,  any  ellipse  obtained  as the  intersection  of  the
  ellipsoid $|D^{-1} \vec{y}|_2  \le 1$ and a  plane containing the
  $z$ axis  is, up to  a $z$ rotation,  a subset of  the ellipse
  obtained as the intersection of the ellipsoid $|D^{-1} \vec{y}|_2
  \le 1$ and the $yz$ plane.

  The generic vector on the boundary  of the $yz$ ellipse can be
  parametrized as
  \begin{align*}
    \vec{y} = \left(0, \pm d_2 \sqrt{ 1 - \frac{z^2}{d_3^2}}, z
    \right)^T,
  \end{align*}
  with  $z \in  [-d_3,  d_3]$, and  thus  the maximum  Euclidean
  distance in Eq.~\eqref{eq:distance} is given by
  \begin{align}
    \label{eq:distance2}
    \Delta(\omega) = \max_{z \in [-d_3,d_3]}  \sqrt{ d_2^2 \left( 1 -
        \frac{z^2}{d_3^2} \right) + (z - \omega c_3)^2 }.
  \end{align}
  By explicit computation one has
  \begin{align*}
    & \frac{d  \Delta(\omega)}{dz}  \\ = & \left[  d_2^2  \left(  1  -
        \frac{z^2}{d_3^2}   \right)  +   (z   - \omega   c_3)^2
    \right]^{-\frac12}  \left[   \left(  1  -\frac{d_2^2}{d_3^2}
      \right) z - c_3 \omega \right],
  \end{align*}
  which  is zero  for  $z^* =  \frac{c_3  d_3^2 \omega}{d_3^2  -
    d_2^2}$, and
  \begin{align*}
    \frac{d^2  \Delta(\omega)}{dz^2} \Big|_{z  =  z^*} =  \left[
      d_3^2 \sqrt{d_2^2 \left( 1 + \frac{c_3^2 \omega^2}{d_2^2 -
            d_3^2}\right)}\right]^{-1} (d_3^2 - d_2^2),
  \end{align*}
  namely $z^*$  attains the maximum  in Eq.~\eqref{eq:distance2}
  whenever $d_2 \ge d_3$.  Therefore  the maximum is attained by
  $z =  z^*$ iff $-d_3 <  z^* \le d_3$, namely  when $|\omega| <
  \frac{d_2^2  -  d_3^2}{d_3  c_3}$,  and   by  $z  =  \pm  d_3$
  otherwise.    By   replacing   $z^*$    and   $\pm   d_3$   in
  Eq.~\eqref{eq:distance2} the statement follows.
\end{proof}

\setcounter{thm}{0}

\begin{thm}
  Any given  binary conditional probability distribution  $p$ is
  compatible  with  any   given  qubit  $D_2$-covariant  channel
  $\ch{X}$ if and only if
  \begin{align*}
    \max_{\omega  \in   \Omega}  (p  \cdot   w^\pm(\omega)  -
    \thr{\ch{X}}{w^\pm(\omega)}) \le 0,
  \end{align*}
  where $\Omega  := \{ 0, \pm  \omega_1, \pm \omega_2, \pm  1 \}
  \cap [-1,1]$.
\end{thm}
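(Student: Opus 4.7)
The plan is to combine Lemma~\ref{thm:compatibility}, the reduction to diagonal and anti-diagonal witnesses of the previous lemma, and the closed-form expression for the threshold provided by Lemma~\ref{lmm:qubit-threshold} to recast compatibility as $\max_{\omega \in [-1,1]} f^\pm(\omega) \le 0$, where $f^\pm(\omega) := p^T \cdot w^\pm(\omega) - \thr{\ch{X}}{w^\pm(\omega)}$. The goal is then to show that the maximum over the interval $[-1,1]$ is attained on the finite candidate set $\Omega$.

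First I would substitute the closed-form threshold from Lemma~\ref{lmm:qubit-threshold} to get $f^\pm(\omega) = L^\pm(\omega) - \frac12 - \frac12 \max(|\omega|,\Delta(\omega))$, where $L^\pm(\omega) = p^T \cdot w^\pm(\omega)$ is affine in $\omega$ (slope $(p_{1|1}-p_{2|2})/2$ for $w^+$, slope $(p_{2|1}-p_{1|2})/2$ for $w^-$). Since $\Delta(\omega)$ is piecewise, consisting of the hyperbolic branch $d_2\sqrt{1 + c_3^2\omega^2/(d_2^2 - d_3^2)}$ on the inner regime $|\omega| < (d_2^2 - d_3^2)/(d_3 c_3)$ and the affine branch $d_3 + c_3 |\omega|$ otherwise, the interval $[-1,1]$ splits into at most four subintervals on each of which $f^\pm$ is either smooth or piecewise affine.

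The key step is to enumerate the candidates for the global maximum. On any affine piece (either the linear branch of $\Delta$, or any piece of $|\omega|$ dominating $\Delta$), $f^\pm$ has no interior critical point and the piece contributes only its breakpoints. The breakpoints are: the endpoints $\pm 1$; the kink $\omega = 0$ coming from $|\omega|$; and the points $\pm \omega_2$ where the two branches of $\max(|\omega|,\Delta(\omega))$ agree, obtained by solving $|\omega| = \Delta(\omega)$ separately in each branch, which produces exactly the two-case expression of $\omega_2$ in the statement. On the smooth hyperbolic piece, setting $\frac{d f^\pm}{d\omega} = 0$ gives a single rational-quadratic equation that solves to $\pm \omega_1$ after a routine algebraic manipulation matching the formula in the theorem. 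Restricting the resulting six candidates to $[-1,1]$ yields $\Omega$.

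The main obstacle will be the case analysis required to handle the degenerate configurations: $c_3 = 0$ (the hyperbolic branch becomes a horizontal line and $\omega_1 \to \infty$, so it drops out of $\Omega \cap [-1,1]$); $d_2 = d_3$ (the inner regime collapses); and the cases when $\omega_2$ falls outside $[-1,1]$ so that the relevant branch of $\max(|\omega|,\Delta)$ is inactive. One must also check that the formula for $\omega_1$ given in the statement, with its somewhat opaque square-root denominator, agrees in sign with the critical-point equation for both $w^+$ and $w^-$; this is a direct bookkeeping on the slope $(p_{1|1}-p_{2|2})$ versus $(p_{2|1}-p_{1|2})$. Once this check is complete, compactness and continuity of $f^\pm$ on $[-1,1]$ guarantee the max is achieved, and the above enumeration guarantees it is achieved in $\Omega$, from which the equivalence in the theorem follows by invoking Lemma~\ref{thm:compatibility} one last time.
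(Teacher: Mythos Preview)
Your proposal is correct and follows essentially the same route as the paper's proof. The paper organizes the argument slightly differently: it writes $f^\pm$ as $\min(g^\pm,h^\pm)$ with $g^\pm(\omega)=p^T\cdot w^\pm(\omega)-\tfrac12(1+|\omega|)$ and $h^\pm(\omega)=p^T\cdot w^\pm(\omega)-\tfrac12(1+\Delta(\omega))$, then shows by computing the second derivative that $h^\pm$ is concave, so its maximum on $[-1,1]$ lies at the unique interior critical point $\pm\omega_1$ or at an endpoint, while $g^\pm$ is piecewise linear with its maximum at $0$; the intersections $\pm\omega_2$ complete the candidate list. Your direct piecewise analysis of $\max(|\omega|,\Delta(\omega))$ arrives at the same candidates via the same calculus, just without invoking concavity explicitly. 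One small wrinkle: your justification for including $0$ (as a ``kink of $|\omega|$'') is not quite right, since near $\omega=0$ one has $\Delta(0)=d_2\ge 0=|0|$, so $\Delta$ dominates and $f^\pm$ has no kink there; the paper instead obtains $0$ as the maximizer of $g^\pm$. This is harmless for the argument, since superfluous candidates do not affect correctness, but you may want to tighten that sentence.
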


\begin{proof}
  The  function  $f^\pm(\omega)  := p^T  \cdot  w^\pm(\omega)  -
  \thr{\ch{X}}{\omega}$ is  the minimum of  continuous functions
  $g^\pm(\omega)    :=     p^T    \cdot     w^{\pm}(    \omega)-
  \frac12(1+|\omega|)$ and $h^\pm(\omega)  := p^T \cdot w^{\pm}(
  \omega)- \frac12(1+\Delta(\omega))$.  Therefore, $\max_{\omega
    \in  [-1,1]}  f^\pm(x)$  is  attained  by  those  values  of
  $\omega$ maximizing $g^\pm(\omega)$  or $h^\pm(\omega)$, or in
  the intersections of $g^\pm(\omega)$ and $h^\pm(\omega)$.

  The function $g^\pm(\omega)$ is  piece-wise linear and attains
  its maximum on $[-1, 1]$ in $0$.  The function $h^\pm(\omega)$
  is quasi-concave continuous with a continuous derivative. Indeed
  \begin{align*}
    & 2 \frac{d h^\pm(\omega)}{d\omega} \\ = & \begin{cases}
      \pm(p_{1|1}-p_{2|2})- \frac{d_2 c_3^2 \omega}{\sqrt{(d_2^2-d_3^2)(d_2^2 - d_3^2 + c_3^2 \omega^2)}} , & \textrm{ if } |\omega| < \frac{d_2^2 - d_3^2}{d_3 c_3},\\
      \pm(p_{1|1}-p_{2|2})- \sgn(\omega) c_3,  & \textrm{ otherwise},\\
    \end{cases}
  \end{align*}
  is continuous and
  \begin{align*}
    2 \frac{d^2 h^\pm(\omega)}{d\omega^2} = \begin{cases}
      - \frac{d_2 c_3^2 (d_2^2 - d_3^2)^2  }{\left[ (d_2^2-d_3^2)(d_2^2 - d_3^2 + c_3^2 \omega^2)\right]^{3/2}}, & \textrm{ if } |\omega| < \frac{d_2^2 - d_3^2}{d_3 c_3},\\
      0, &  \textrm{ if  } |\omega|  > \frac{d_2^2  - d_3^2}{d_3
        c_3}.
    \end{cases}
  \end{align*}
  is non positive. Therefore $h^\pm(\omega)$ attains its maximum
  on $[-1, 1]$ in $0$, $\pm  1$, or in the zero $\pm\omega_1$ of
  its first derivative.

  Due  to the  piece-wise linearity  of $g^\pm(\omega)$  and the
  quasi-concavity  of   $h^\pm(\omega)$,  since   $g^\pm(0)  \ge
  h^\pm(0)$  and  $g^\pm(\pm1)  \le h^\pm(\pm1)$  one  has  that
  $g^\pm(\omega)$ and  $h^\pm(\omega)$ intersect in  exactly two
  points $\pm\omega_2 \in [-1, 1]$, thus the statement follows.
\end{proof}

\setcounter{cor}{1}

\begin{cor}
  Any given  binary conditional probability distribution  $p$ is
  compatible with the amplitude-damping channel $\ch{A}^\lambda$
  if and only if
  \begin{align*}
    \left(    \sqrt{p_{1|2}p_{2|1}}   -    \sqrt{p_{1|1}p_{2|2}}
    \right)^2 \le \lambda.
  \end{align*}
\end{cor}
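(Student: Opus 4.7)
The plan is to invoke Theorem~\ref{thm:qubit-compat} on the amplitude-damping channel $\ch{A}^\lambda$ and reduce the resulting finite list of witness inequalities to the single closed-form condition claimed. First I would compute the Pauli parameters of $\ch{A}^\lambda$ directly from its Kraus operators via $A_{i,j} = \tfrac12 \Tr[\sigma_i \ch{A}^\lambda(\sigma_j)]$ and $b_i = \tfrac12 \Tr[\sigma_i \ch{A}^\lambda(\openone)]$. A short calculation gives $A = \mathrm{diag}(\sqrt{\lambda}, \sqrt{\lambda}, \lambda)$, already in polar form with $U = V = \openone$, and $\vec{b} = (0, 0, 1-\lambda)^T$. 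Therefore $d_1 = d_2 = \sqrt{\lambda}$, $d_3 = \lambda$ and, after the standard sign convention, $c_3 = 1 - \lambda$, while $c_1 = c_2 = 0$. This both confirms $D_2$-covariance and yields the specific parameter values quoted in the main text, so Theorem~\ref{thm:qubit-compat} is directly applicable.

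Next, I would identify which elements of $\Omega = \{0, \pm\omega_1, \pm\omega_2, \pm 1\} \cap [-1,1]$ produce a nontrivial constraint for $\ch{A}^\lambda$. A direct substitution gives $d_2^2 - d_3^2 = \lambda(1-\lambda) = d_2^2 c_3$, so the branching condition for $\omega_2$ falls in the degenerate equality case and the second formula yields $\omega_2 = d_3/(1-c_3) = 1$. Hence $\pm\omega_2$ collapses onto $\pm 1$, and only $\omega \in \{0, \pm\omega_1, \pm 1\}$ need be tested. At $\omega = \pm 1$ the threshold equals $(1 + d_3 + c_3)/2 = 1$, so the compatibility inequality reduces to $p_{j|i} \le 1$, which is trivial. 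For $\omega = 0$ I would invoke the quasi-concavity of $h^\pm(\omega)$ established in the proof of Theorem~\ref{thm:qubit-compat}: since $h^\pm$ attains its unique interior maximum at $\pm\omega_1$, enforcing the inequality at $\pm\omega_1$ already forces $h^\pm(0) \le h^\pm(\pm\omega_1) \le 0$ and hence $f^\pm(0) \le 0$. Thus the only binding constraints come from $\omega = \pm\omega_1$.

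The substantive work is then the simplification of $p^T \cdot w^\pm(\omega_1) - \thr{\ch{A}^\lambda}{w^\pm(\omega_1)} \le 0$. Substituting $d_2 = \sqrt{\lambda}$, $d_3 = \lambda$, $c_3 = 1-\lambda$ into the formula for $\omega_1$ gives
\begin{align*}
\omega_1 = \frac{\sqrt{\lambda}\,(p_{1|1} - p_{2|2})}{\sqrt{(1-\lambda)\bigl[(1-\lambda) - (p_{1|1}-p_{2|2})^2\bigr]}},
\end{align*}
and inserting this into the threshold $\tfrac12 [1 + d_2\sqrt{1 + c_3^2 \omega_1^2/(d_2^2 - d_3^2)}]$ and into $p^T \cdot w^\pm(\omega_1)$, then isolating the remaining square root and squaring, turns the condition into a polynomial inequality of degree at most four in the four entries of $p$ (subject to $p_{2|i} = 1 - p_{1|i}$). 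The target is to recognize this polynomial as the perfect square $(\sqrt{p_{1|2} p_{2|1}} - \sqrt{p_{1|1} p_{2|2}})^2 - \lambda$, with the two sign choices $\pm\omega_1$ combined with the two witnesses $w^\pm$ jointly accounting for the absolute value implicit in the squared difference of geometric means.

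The hard part will be precisely this final factorization: the raw inequality is not manifestly a square, and the natural strategy is to guess the factored form from limiting cases (notably $\lambda \to 0$, where the condition must collapse to $\det(p_{j|i}) = 0$, matching the trace-class limit of $\ch{A}^\lambda$) and then verify it by direct expansion. Once the perfect-square form is obtained, one only needs to check that the domain condition $\omega_1 \in [-1,1]$, equivalent to positivity of the expression under the square root in its denominator, is either implied by or compatible with the final inequality, after which the equivalence asserted by the corollary follows immediately.
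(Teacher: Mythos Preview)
Your strategy coincides with the paper's: apply Theorem~\ref{thm:qubit-compat} with the parameters $d_2=\sqrt{\lambda}$, $d_3=\lambda$, $c_3=1-\lambda$, observe $\omega_2=1$ so that $\pm 1$ give only the trivial bounds $p_{j|i}\le 1$, and reduce everything to the witness at $\pm\omega_1$. The difference lies in the algebraic endgame. The paper does not substitute $\omega_1$ into the threshold and look for a perfect square; instead it writes the constraint directly as
\[
|p_{1|1}-p_{1|2}|\le\sqrt{\lambda\Bigl[1-\tfrac{(p_{1|1}-p_{2|2})^2}{1-\lambda}\Bigr]},
\]
squares, and regards the result as a \emph{quadratic in $\lambda$}. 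Its roots are $\lambda_\pm=(\sqrt{p_{1|1}p_{2|2}}\pm\sqrt{p_{1|2}p_{2|1}})^2$, so the condition reads $\lambda_-\le\lambda\le\lambda_+$; the upper bound is then shown to be vacuous via $1-|p_{1|1}-p_{2|2}|\le\lambda_+$. This gives the factored form in one stroke and avoids the guess-and-verify you anticipate.

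One slip and one gap deserve attention. First, ``$\omega_1\in[-1,1]$'' is \emph{not} equivalent to positivity of the radicand in its denominator: the latter gives $(p_{1|1}-p_{2|2})^2<1-\lambda$, whereas $|\omega_1|\le 1$ is the strictly stronger condition $|p_{1|1}-p_{2|2}|\le 1-\lambda$. Second, when $|p_{1|1}-p_{2|2}|>1-\lambda$ the point $\pm\omega_1$ drops out of $\Omega$ and your quasi-concavity argument for $\omega=0$ no longer applies as stated (there is no interior maximizer to dominate it). You must then argue separately that compatibility is automatic in this regime \emph{and} that the claimed inequality $\lambda\ge\lambda_-$ also holds there; the paper handles this by checking that $1-|p_{1|1}-p_{2|2}|$ lies between $\lambda_-$ and $\lambda_+$, which your outline does not yet do.
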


\begin{proof}
  One has $c_{3}= 1 - \lambda$, $d_2 = \sqrt{\lambda}$, and $d_3
  = \lambda$, thus
  \begin{align*}
    \omega_{1}                                                 =
    \sqrt{\frac{\lambda}{(1-\lambda)((1-\lambda)-(p_{1|1}-p_{2|2})^{2}
        )}} (p_{1|1}-p_{2|2}),
  \end{align*}
  and $\omega_2  = 1$.  By explicit  computation, the conditions
  $\omega_1  \in   \mathbb{R}$  and   $|\omega_1|  \le   1$  are
  equivalent  to  $(p_{1|1} -  p_{2|2})^2  <  1 -  \lambda$  and
  $(p_{1|1}  - p_{2|2})^2  \le (1  - \lambda)^2$,  respectively,
  thus  $\omega_1  \in  [-1,1]$  is equivalent  to  $(p_{1|1}  -
  p_{2|2})^2 \le (1 - \lambda)^2$ for any $\lambda > 0$.

  By      explicit     computation,      the     maximum      in
  Eq.~\eqref{eq:qubit-compatibility}  is attained  at $\omega  =
  \pm\omega_1$ and $\omega = \pm1$ whenever $|p_{1|1} - p_{2|2}|
  \le 1  - \lambda$  and $|p_{1|1}  - p_{2|2}|  > 1  - \lambda$,
  respectively. Thus Eq.~\eqref{eq:qubit-compatibility} becomes
  \begin{align*}
    |  p_{1|1}   -  p_{1|2}  |   -  \sqrt{\lambda  \left[   1  -
        \frac{(p_{1|1} -p_{2|2})^2}{1-\lambda} \right]} \le 0,
  \end{align*}
  whenever  $|p_{1|1} -  p_{2|2}| \le  1 -  \lambda$, which,  by
  solving  in  $\lambda$,  becomes $\lambda_-  \le  \lambda  \le
  \lambda_+$  whenever $\lambda  \le 1  - |p_{1|1}  - p_{2|2}|$,
  where $\lambda_\pm = (\sqrt{p_{1|1} p_{2|2}} \pm \sqrt{p_{1|2}
    p_{2|1}})^2$.   By  explicit  computation $1  -  |p_{1|1}  -
  p_{2|2}| \le \lambda_+$, so the statement follows.
\end{proof}


\begin{thebibliography}{}
\bibitem{Sha58}  Shannon, C.~E.,  {\em A  Note on  a Partial
    Ordering for  Communication Channels },  Information and
  Control {\bf 1}, 390 (1958).
\bibitem{selftest} As a comparison we notice that, while our
  approach is  top-down, i.e. it aims  at characterizing the
  set of correlations compatible with a given hypothesis, in
  self-testing~\cite{MY03,  MMMO06, BLMMS09,  MYS12, SASA15,
    WWS16}  the  approach is  bottom-up,  i.e.   it aims  at
  characterizing  the set  of hypotheses  compatible with  a
  given correlation.
\bibitem{MY03}  Mayers,  D.,  Yao,  A.,  {\em  Self  testing
    quantum apparatus},  Quantum Information  \& Computation
  4, 273, (2003).
\bibitem{MMMO06}  Magniez, F.,  Mayers, D.,  Mosca, M.,  and
  Ollivier H.,  {\em Self-testing  of quantum  circuits}, in
  Proceedings  of  33rd  ICALP, Lecture  Notes  in  Computer
  Science (Springer, 2006).
\bibitem{BLMMS09}  Bardyn, C.-E.,  Liew, T.~C.   H., Massar,
  S., McKague, M., and  Scarani, V.  {\em Device-independent
    state  estimation based  on Bell’s  inequalities}, Phys.
  Rev.  A 80, 062327 (2009).
\bibitem{MYS12} McKague,  M., Yang, T.~H., and  Scarani, V.,
  {\em Robust  self-testing of  the singlet}, J.   Phys.  A:
  Math. Theor. 45, 45, 455304 (2012).
\bibitem{SASA15} \u{S}upi\'{c} I.,  Augusiak R., Salavrakos,
  A., and Ac\'{i}n, A., {\em Self-testing protocols based on
    the chained Bell inequalities}, arXiv:1511.09220.
\bibitem{WWS16} Wang Y., Wu X., and Scarani V., {\em All the
    self-testings   of   the    singlet   for   two   binary
    measurements}, New J.  Phys. 18, 025021 (2016).
\bibitem{circuit}  The  role  of   the  circuit  within  any
  hypothesis is to describe  the space-time structure of the
  experiment,  usually assumed  to obey  special relativity.
  Thus,   while   circuits   corresponding   to   space-like
  correlations   are   constrained   by   the   no-signaling
  principle, those  corresponding to  time-like correlations
  are    only   constrained    by   the    strictly   weaker
  no-signaling-from-the-future        principle~\cite{Dar07,
    Oza07}.  As a consequence, the hypotheses falsifiable in
  a time-like  test are inherently more  specific than those
  falsifiable in  a space-like  test: for instance,  while a
  Bell-test~\cite{Bel64,   CHSH69,  Cir80}   can  rule   out
  classical  theory  altogether,  a classical  model  always
  exists supporting any given time-like correlation.
\bibitem{Dar07} D'Ariano, G.~M., {\em Operational axioms for
    C*-algebra  representation   of  transformations},  work
  presented  at  the  conference  proceedings  of  the  {\em
    Quantum Theory: Reconsideration  of Foundations, 4} held
  on  11-16  June  2007  at  the  International  Centre  for
  Mathematical   Modeling   in  Physics,   Engineering   and
  Cognitive Sciences, Vaxjo University, Sweden.
\bibitem{Oza07} Ozawa, M., private communication.
\bibitem{Bel64} J.~S.  Bell, {\em On the Einstein-Podolsky-Rosen
    Paradox}, Physics {\bf 1}, 195 (1964).
\bibitem{CHSH69}  J.~F.   Clauser,  M.~A.  Horne,  A.   Shimony;
  R.~A.   Holt,  {\em   Proposed   experiment   to  test   local
    hidden-variable theories},  Phys. Rev.  Lett. {\bf  23}, 880
  (1969).
\bibitem{Cir80} B.~S.   Cirel'son, {\em  Quantum Generalizations
    of Bell's Inequality}, Lett. Math. Phys. {\bf 4}, 93 (1980).
\bibitem{Dal17} M. Dall'Arno,  {\em Device-independent tests
    of quantum states}, arXiv:1702.00575.
\bibitem{DBBV16} M.   Dall'Arno, S.  Brandsen,  F.  Buscemi,
  and V. Vedral, arXiv:1609.07846.
\bibitem{FW15}  P.E.~Frenkel, \&  M.~Weiner, {\em  Classical
    Information  Storage in  an  $n$-Level Quantum  System},
  Commun. Math. Phys. {\bf 340}, 563 (2015).
\bibitem{Wer98}  Werner, R.~F.,  {\em  Optimal  cloning of  pure
    states}, Phys. Rev. A 58, 1827 (1998).
\bibitem{BDPS03}  Buscemi, F.,  D'Ariano, G.~M.,  Perinotti, P.,
  and   Sacchi,  M.~F.,   {\em   Optimal   realization  of   the
    transposition maps}, Phys. Lett. A {\bf 314}, 374 (2003).
\bibitem{GBHA10}  Gallego,  R.,  Brunner, N.,  Hadley,  C.,  and
  Ac\'in,  A., {\em  Device-Independent Tests  of Classical  and
    Quantum Dimensions},  Phys.  Rev.  Lett.  {\bf  105}, 230501
  (2010).
\bibitem{HGMBAT12}  Hendrych, M.,  Gallego, R.,  Mi\v{c}uda, M.,
  Brunner, N., Ac\'in, A.,  and Torres, J.~P., {\em Experimental
    estimation  of  the  dimension   of  classical  and  quantum
    systems}, Nature Phys. {\bf 8}, 588-591 (2012).
\bibitem{ABCB11} Ahrens,  H., Badzi\c{a}g, P., Cabello,  A.  and
  Bourennane, M.  {\em Experimental Device-independent  Tests of
    Classical and  Quantum Dimensions}, Nature Physics  {\bf 8},
  592 (2012).
\bibitem{DPGA12} Dall'Arno,  M., Passaro,  E., Gallego,  R.  and
  Ac\'in, A.,  {\em Robustness  of device  independent dimension
    witnesses}, Phys. Rev. A {\bf 86}, 042312 (2012).
\bibitem{CBB15}  Chaves, R.,  Bohr Brask,  J., and  Brunner, N.,
  {\em   Device-Independent  Tests   of  Entropy},   Phys.  Rev.
  Lett. {\bf 115}, 110501 (2015).
\bibitem{Wil11}  Wilde, M.~M.,  {\em From  classical to  quantum
    Shannon theory}, arXiv:1106.1445.
\bibitem{BV04}  Boyd, S.~P.  and Vandenberghe,  L.  {\em  Convex
    Optimization} (Cambridge University Press, 2004).
\bibitem{Bus12}   Buscemi,  F.,   {\em  Comparison   of  quantum
    statistical models: equivalent  conditions for sufficiency},
  Comm. Math. Phys. {\bf 310}(3), 625 (2012).
\bibitem{Hel76}  Helstrom,  C.~W.,  {\em Quantum  Detection  and
    Estimation Theory} (Academic Press, New York, 1976).
\end{thebibliography}
\end{document}